\newtheorem{theorem}{Theorem}
\newtheorem{lemma}{Lemma}
\newtheorem*{corollary}{Corollary}
\begin{document}

\title[Synergy as the failure of distributivity]{Synergy as the failure of distributivity}

\author*[1]{\fnm{Ivan} \sur{Sevostianov}}\email{ivan.sevostianov@weizmann.ac.il}

\author[1]{\fnm{Ofer} \sur{Feinerman}}\email{ofer.feinerman@weizmann.ac.il}

\affil[1]{\orgdiv{Department of Physics of Complex Systems}, \orgname{Weizmann Institute of Science}, \orgaddress{\street{Herzl 234}, \city{Rehovot}, \postcode{7610001}, \country{Israel}}}

\abstract{The concept of emergence, or synergy in its simplest form, is widely used but lacks a rigorous definition. Our work connects information and
set theory to uncover the mathematical nature of synergy as the failure of distributivity. It resolves the persistent self-contradiction of information decomposition theory and reinstates it as a primary route toward a rigorous definition of emergence. Our results suggest that non-distributive variants of set theory may be used to describe emergent physical systems.}

\keywords{emergence, synergy, information diagrams, non-distributive, parthood}

\maketitle

\section{Introduction}

Reductionism is a standard scientific approach in which a system is studied by breaking it into smaller parts. However, some of the most interesting phenomena in physics and biology appear to resist such disentanglement. In these cases, complexity emerges from intricate interactions between many predominantly simple components \cite{intro}. Such \emph{synergic} systems are typically described as ``a whole that is greater than the sum of its parts''. To pour quantitative meaning into this equation-like definition, it is natural to borrow tools from the mathematical theory that describes part-whole relationships, namely set theory. Unfortunately, for finite sets, a simple Venn diagram suffices to demonstrate that the whole ($A\cup B$) can never exceed the sum of its parts ($A$ and $B$).
\begin{eqnarray}\label{sec1:2incexcsets}
    && |A\cup B| = |A| + |B| - |A\cap B| \leq |A| + |B|
\end{eqnarray}
In fact, the trivial interaction, $A\cap B$, between the two parts of the system decreases the size of the whole rather than increasing it. 

To allow for more intricate interactions, one can turn to the realm of random variables. It is well known that measuring the outcome of two random variables can provide more information than the sum of what is obtained when measuring each separately. Moreover, the textbook description of the interactions between random variables often involves set-theoretical-like Venn diagrams \cite{coverthomas}. These two facts lead to the intriguing possibility that random variables may lend themselves to a mathematical description of non-trivial whole-part relationships.
 
 Take two discrete variables $W$ and $Z$: the information $W$ contains about $Z$ is determined by the mutual information function $I(W;Z)$ \cite{Shannon}. Cases for which $W$ can be presented as a joint distribution $W=(X,Y)$ allow us to compare the whole against its parts:
\begin{eqnarray}\label{intro:wholevspartsinfo}
    && I((X, Y); Z) \lesseqqgtr I(X; Z) + I(Y; Z)
\end{eqnarray}
In other words, looking at both system parts together can convey either more or less information than their added values. Therefore, and in contrast to eq. \ref{sec1:2incexcsets}, this formalism can be used to describe synergy.

\begin{figure}
    \begin{center}  
        \includegraphics[scale=0.42]{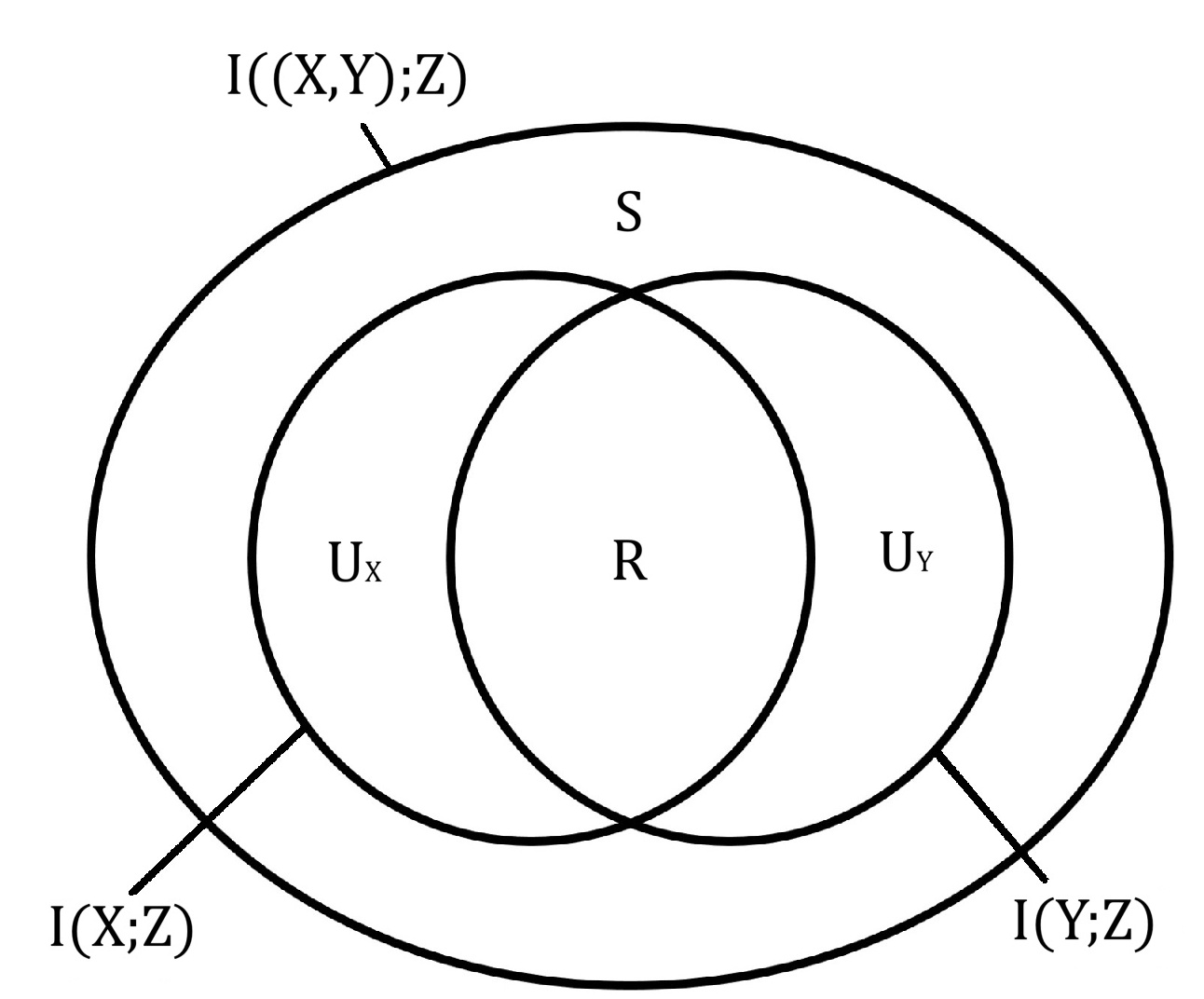}
        \caption{A diagram representing different types of mutual information in the system of two random variables $X, Y$ about $Z$ by the part-whole relations. Redundant information is shared between $X$ and $Y$, unique information is a part of just one of them, while synergistic information is something that is only contained in the joint distribution, but not individual sources on their own.}
        \label{fig:bivariatePID}
    \end{center}
\end{figure}

In their seminal paper \cite{WilliamsAndBeer}, Williams and Beer proposed the framework of \emph{partial information decomposition} as a way of assessing the underlying structure of a two discrete random variable system and quantifying the amount of synergy between its parts. They suggested that, much like a set of elements, each variable can be decomposed into separate information ``subsets''. These \emph{information atoms} are assumed to have non-negative size and can either be shared between variables ($R$) or uniquely present in only one of them ($U_X, U_Y$) (see Fig. \ref{fig:bivariatePID})
\begin{eqnarray}\label{intro:PIDequations}
	&& I(X; Z) = R + U_X, \cr
	&& I(Y; Z) = R + U_Y, \cr
	&& I((X, Y); Z) = R + U_X + U_Y + S, \cr 
        && R, U_X, U_Y, S \geq 0
\end{eqnarray}
An additional synergy term ($S$)  was artificially introduced to provide a simple mechanism that allows the whole to be greater than the sum of its parts
\begin{eqnarray}
    && I((X, Y); Z) - I(X; Z) - I(Y; Z) = S - R > 0 \text{ iff } S > R
\end{eqnarray}

A series of papers \cite{BROJA,Artemy,Identity,ZeroError2} focused on calculating these atoms' sizes by fixing the single remaining degree of freedom in equations (\ref{intro:PIDequations}). No consensus has yet been reached regarding a single physical solution. Meanwhile, the field of applications is getting wider \cite{GreaterThanTheParts, Lizier}. Recent works extend the theory to continuous variables \cite{PIDextension, continuous2}, introduce causality \cite{CausalEmergence, Tononi}, and consider quantum information \cite{QuantumPID}.

Unfortunately, partial information decomposition has a significant drawback that puts the whole approach into question: no extension beyond two variables is possible without a fundamental self-contradiction \cite{BROJ}. Some authors attempted to resolve this by abandoning the basic properties required of information atoms, including their non-negativity \cite{Pointwise, IncePHD}.

In what follows, we reconsider the foundations of partial information decomposition and pinpoint the source of its long-standing self-contradictions. To do this, we follow  Hu \cite{Hu, HuTheorem, Yeung} to establish a rigorous relation between information and set theories and highlight a fundamental distinction between them: random variables, unlike sets, do not adhere to the union/intersection distributivity axiom \cite{TerryTao}. This leads us to study a distributivity-free variant of set theory as a possible self-consistent theory of information atoms. Within this framework, we demonstrate that the presence of synergistic properties is a direct consequence of the broken axiom. In the case of $N=3$ random variables, we show that the amount of synergistic information precisely coincides with the extent to which distributivity is breached. The acquired understanding allows us to resolve the contradictions and suggest a coherent multivariate theory, which may provide the foundations for quantifying emergence in large systems.

\section{Results}

\subsection{Set-theoretic approach to information}

In this section, we formalize the distinction between finite sets and discrete random variables. Clearly, it is linked to the synergistic behavior of the latter. We will first focus on a special illustrative example: the XOR gate. This system contains neither redundant nor unique information, which will emphasize the peculiar properties of synergy. A more general discussion, including arbitrary random variables, will be presented in the next section.

\paragraph{Basic random variable operations}

Some set-theoretic operations have straightforward extensions to random variables. The first of these emerges by comparing eqs. \ref{sec1:2incexcsets} and \ref{intro:wholevspartsinfo} and relates the joint distribution of two variables with the union operator ($\cup$). One can now go on to define random variable inclusion as
\begin{eqnarray}\label{sec1:inclusion}
    && X \subseteq Y \Leftrightarrow \exists Z : X\cup Z = Y 
\end{eqnarray}
which is, actually, equivalent to $X$ being a deterministic function of $Y$. 

The inclusion-exclusion formula for two random variables reveals mutual information as the size of the intersection between two random variables
\begin{eqnarray}\label{sec1:incexc2}
    && H(X\cup Y) = H(X, Y) = H(X) + H(Y) - I(X; Y),
\end{eqnarray} 
where Shannon entropy $H$ is regarded as a measure on the random variable space. Indeed, it complies with the many properties required of a mathematical measure \cite{Measure}: non-negativity, monotonicity and subadditivity. Further, entropy is zero only for deterministic functions, which play the role of empty set
\begin{eqnarray}
    && H(X) \geq 0, \cr 
    && X \subseteq Y \Rightarrow H(X) \leq H(Y), \cr 
    && H(X_1 \cup X_2 \cup \dots \cup X_N) \leq H(X_1) + H(X_2) + \dots + H(X_N), \cr
    && H(X) = 0 \Leftrightarrow X = \emptyset
\end{eqnarray}
A rigorous definition of intersection ($\cap$) needs to comply with the inclusion order (\ref{sec1:inclusion}): $X\cap Y \subseteq X, X\cap Y \subseteq Y$. Unfortunately, a random variable satisfying both conditions does not always exist \cite{ZeroError1}. Nonetheless, a physically sensible intersection may be inferred in several cases:
\begin{eqnarray}\label{sec1:intersection}
    && H(X \cup Y) = H(X) + H(Y) \Leftrightarrow X\cap Y = \emptyset, \cr 
    && X \subseteq Y \Leftrightarrow H(X) = I(X; Y) \Leftrightarrow X\cap Y = X
\end{eqnarray}
These simple parallels between information theory and set theory are enough to study information decomposition in a random variable XOR gate. 

\paragraph{Subdistributivity}

Our set-theoretic intuition for random variables breaks down even further when considering the XOR gate: three pairwise independent fair coins $O_1, O_2, O_3$ with an additionally imposed higher order interaction - parity rule $O_3 = O_1 \oplus O_2$
\begin{center}
\begin{tabular}{|c|c|c|c|}
     \hline
       Probability & $O_1$ & $O_2$ & $O_3$ \\ [0.5ex] 
     \hline
     $1/4$ & 0 & 0 & 0  \\ 
     \hline
     $0$ & 0 & 0 & 1 \\
     \hline
     $0$ & 0 & 1 & 0 \\ 
     \hline
     $1/4$ & 0 & 1 & 1 \\ 
     \hline
     $0$ & 1 & 0 & 0 \\ 
     \hline
     $1/4$ & 1 & 0 & 1 \\
     \hline
     $1/4$ & 1 & 1 & 0 \\ 
     \hline
     $0$ & 1 & 1 & 1 \\ 
     \hline
    \end{tabular}
\end{center}
The pairwise independence dictates $O_2 \cap O_3 = O_1 \cap O_3 = \emptyset$, while the parity rule makes $O_3$ a deterministic function of the joint distribution $(O_1, O_2)$
\begin{eqnarray}\label{sec1:XORsubset}
    && O_3 \subseteq (O_1\cup O_2) \Rightarrow (O_1\cup O_2) \cap O_3 = O_3
\end{eqnarray}
A simple conclusion from this facts is that the XOR-gate variables do not comply with the set-theoretic axiom of distributivity
\begin{eqnarray}
    && (O_1\cup O_2) \cap O_3 = O_3 \neq \emptyset = (O_1\cap O_3) \cup (O_2\cap O_3)
\end{eqnarray}
Nevertheless, it can be shown that a weaker relation of \emph{subdistributivity} holds for any three random variables (SI, Lemma \ref{SI:lemma:subdistributivity})
\begin{eqnarray}\label{subdistributivityaxiom}
    && (X \cup Y) \cap Z \supset (X \cap Z) \cup (Y \cap Z)
\end{eqnarray}
Even though it is evident that random variables are quite different from sets, we argue that some of the logic behind partial information decomposition may be recovered by extending set-theoretic notions, such as the inclusion-exclusion principle and Venn diagrams, to non-distributive systems. 

\paragraph{Inclusion-exclusion formulas}

The inclusion-exclusion formula for the XOR gate can be obtained by repeatedly applying the 2-variable equation (\ref{sec1:incexc2}) and using that $I(X;Y) = H(X \cap Y)$ when the intersection exists
\begin{eqnarray}\label{sec1:XORincexc}
    && H(O_1 \cup O_2 \cup O_3) = \cr 
    && = H(O_1 \cup O_2) + H(O_3) - H((O_1\cup O_2)\cap O_3) = \cr 
    && = H(O_1) + H(O_2) + H(O_3) - H((O_1\cup O_2)\cap O_3)
\end{eqnarray}
It disagrees with the analogous set-theoretic formula (for non-intersecting sets) only in the last term, which is non-zero precisely due to the subdistributivity. Note that while the rest of the terms are symmetric with respect to the permutation of indices, expression $(O_1\cup O_2)\cap O_3$ is not as it explicitly depends on the order of derivation. This essentially leads to three different inclusion-exclusion formulas. Nonetheless the size of the distributivity-breaking term remains invariant 
\begin{eqnarray}
    && H((O_1\cup O_2)\cap O_3) = H((O_1\cup O_3)\cap O_2) = H((O_2\cup O_3)\cap O_1)
\end{eqnarray}

\paragraph{Construction of Venn-type diagram for the XOR gate}

The non-uniqueness of inclusion-exclusion formulas complicates the construction of Venn diagrams. A way of tackling this as well as some further intuition can be traced via our XOR gate example. 

In set theory, Venn diagrams act as graphical representations of the inclusion-exclusion principle. The inclusion-exclusion formula computes the size of union as a sum of all possible intersections between the participating sets. For correct bookkeeping, this is done with alternating signs that account for the \emph{covering number} - the number of times each intersection is counted as a part of some set. In classical set theory, the covering number of an intersection is trivially the number of sets which are being intersected. However, (\ref{sec1:XORincexc}) includes the distributivity-breaking term which is absent from this classical theory and whose covering number is not evident. It appears with a negative sign which signifies an even-times covered region. In this three variable system, the only even alternative is a $2$-covered region. From another perspective, in each of the three possible formulas $O_k$ is covered once by itself and one more time by the union $O_i\cup O_j$ (though not by $O_i$ or $O_j$ individually). As for the size of this region, independent of $k$, it measures at $1$ bit of information. Denoting this area as $\Pi_s$, we have
\begin{eqnarray}\label{sec1:XORsatomdef}
    && \Pi_s[2] = H((O_i\cup O_j)\cap O_k) = H(O_k) = 1 \text{ bit},
\end{eqnarray}
where the covering number is indicated in the brackets $[]$. The region $\Pi_s$ does not contain enough information to describe the whole system, as the total amount of entropy in the XOR gate is $H(O_1\cup O_2\cup O_3) = 2 \text{ bit}$. To draw the diagram, we need to find the covering of the remaining $2-1=1$ bit region/regions. This may be accomplished by borrowing two properties of set-theoretic diagrams. 

First of all, in a system of $N$ arbitrary random variables $X_1, \dots X_N$ the total entropy of the system is equal to the sum of all diagram regions $\Pi_i[c_i]$
\begin{eqnarray}\label{totalarea}
    && H(X_1, \dots X_N) = \sum_i \Pi_i[c_i]
\end{eqnarray}

Second, the sum of individual variables' entropies is equal to the sum of region sizes times their corresponding covering numbers $c_i$ 
\begin{eqnarray}\label{conservationlaw}
    && H(X_1) + H(X_2) + \dots + H(X_N) = \sum_i c_i\Pi_i[c_i]
\end{eqnarray}
These properties may be viewed as the \emph{information conservation law}: adding new sources should either introduce new information or increase the covering of existing regions. Information cannot spontaneously arise nor vanish.

Let us assume that in addition to $\Pi_s$ the diagram of the XOR gate contains several more regions $\Pi_{j\neq s}$. To calculate their sizes and coverings we apply (\ref{totalarea}-\ref{conservationlaw}) 
\begin{eqnarray}
    && \sum_{j\neq s} (c_j - 1)\Pi_j[c_j] = 0 \text{ bit}
\end{eqnarray}
We use the fact that information is non-negative and discard meaningless empty regions. The above equation then allows for a single $1$-bit region, which is covered once 
\begin{eqnarray}
    && \Pi_g[1] = 1 \text{ bit}
\end{eqnarray}
To respect the physical meaning behind the diagram regions as pieces of information, we demand the structure of the diagram to be well-defined. In other words, despite the existence of three different versions of inclusion-exclusion formula (\ref{sec1:XORincexc}), they are all assumed to describe the \emph{same} system. Indeed, our result remains invariant with respect to index permutations in terms of region sizes and covering numbers. 

In regard to the shape of the Venn diagram, this assumption dictates along with (\ref{sec1:XORsubset},\ref{sec1:XORsatomdef}) that the region $\Pi_s$ corresponds to all variables at the same time
\begin{eqnarray}
    && \Pi_s = H(O_1) = H(O_2) = H(O_3)
\end{eqnarray}
One can think of $\Pi_s$ as a $2$-covered triple intersection between $O_1, O_2$ and $O_3$. This is a drastic divergence from classical set theory where an intersection between $n$ sets is covered exactly $n$ times. As we shall see, without distributivity, $n$ variables can have multiple intersection regions with different covering numbers $1 \leq c\leq n$.

Moving on to the second region in this system: $\Pi_g$ appears as a leftover when taking the difference between the whole system and $\Pi_s$ and by set-theoretic intuition it does not intersect with $O_k$ for any $k$. As such it is not a part of any single variable. 

Finally, we combine all findings into a system of equations, which generates the Venn-type diagram of the information distribution inside XOR gate (Fig. \ref{fig:XORdiagram})
\begin{eqnarray}\label{sec1:XORdiagrameqs}
    && H(O_1) = \Pi_s, \cr 
    && H(O_2) = \Pi_s, \cr 
    && H(O_3) = \Pi_s, \cr 
    && H(O_1 \cup O_2) = \Pi_s + \Pi_g, \cr
    && H(O_2 \cup O_3) = \Pi_s + \Pi_g, \cr
    && H(O_1 \cup O_3) = \Pi_s + \Pi_g, \cr
    && H(O_1\cup O_2\cup O_3) = \Pi_s + \Pi_g
\end{eqnarray}
In usual Venn diagrams, intersections represent correlations between the different parts. Similarly, in the XOR gate the higher-order parity interaction added on top of the non-correlated variables is responsible for the appearance of the $2$-covered triple intersection.
\begin{figure}
    \begin{center}  
        \includegraphics[scale=0.42]{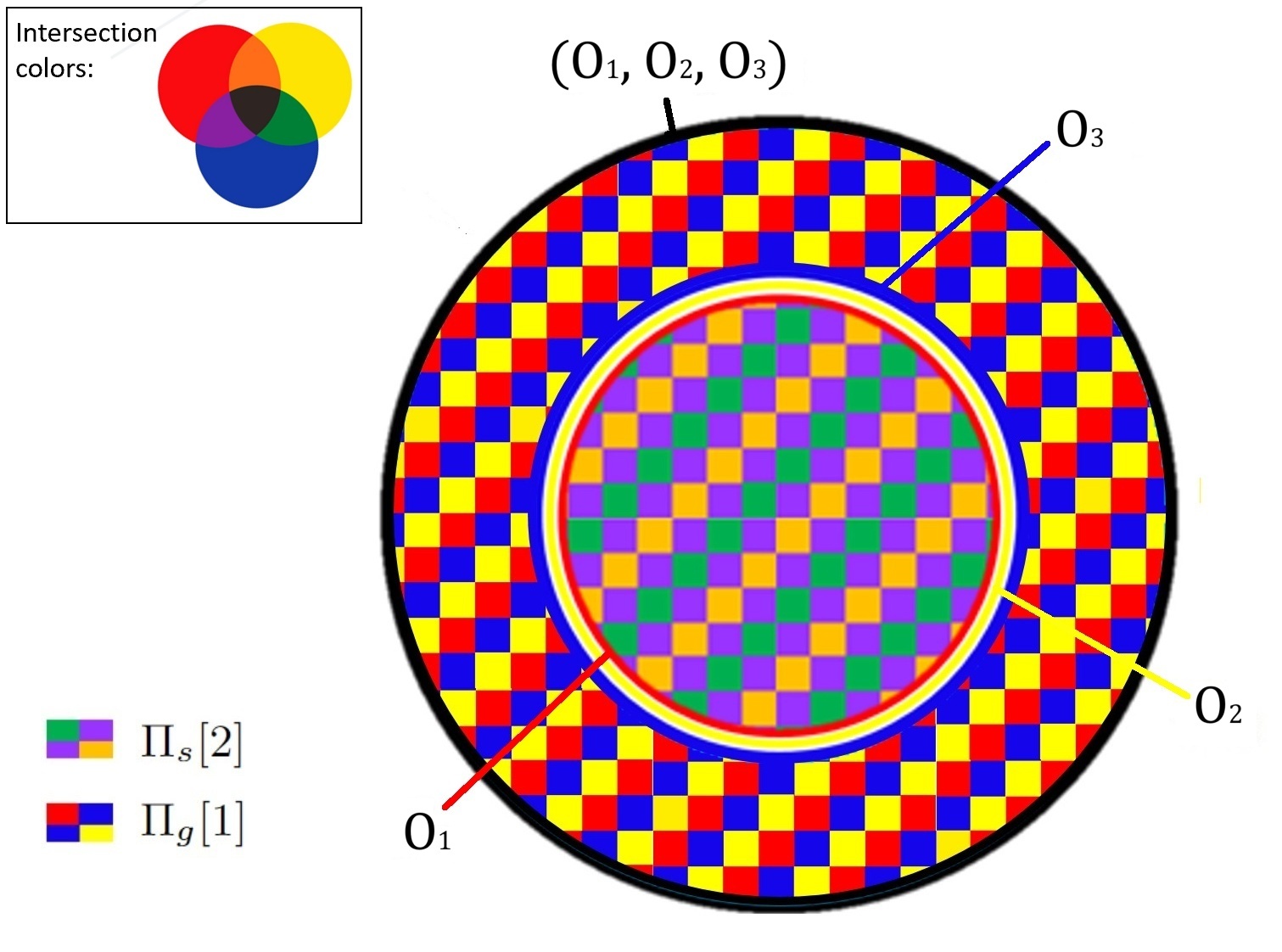}
        \caption{A Venn-type diagram for the XOR gate. Each variable is represented by a primary color circle (red, yellow, blue) while the outer circle outlines the whole system. Of the total $2$ bits of the XOR gate, one is covered two times and is represented by the inner disk. Since it is covered twice this area is colored by  pairwise color-blends (orange, purple, and green). Since it is covered by three variables it includes patches of all three possible blends. A critical difference between this diagram and a set-theoretic one is that even though the three variables have no pairwise intersections, the inner disk representing the 'mutual' content of all three variables is non-empty. The remaining $1$ bit is covered once and resides only inside the joint distribution. Since this area is covered once, it is colored by primary colors.  Patches of all three colors are used since this area does not belong to any single variable.} 
        \label{fig:XORdiagram}
    \end{center}
\end{figure}

\paragraph{Synergy as an information atom}

We can compare our set-theory inspired results against the expectations of partial information decomposition. Namely, equations (\ref{intro:PIDequations}) state that the information $O_1$ and $O_2$ carry about $O_3$ can be described by the atoms $R = U_1 = U_2 = 0 \text{ bit}, S = 1 \text{ bit}$. The left side of each line in (\ref{intro:PIDequations}) may be rewritten by definition as an intersection of random variables
\begin{eqnarray}
    && I(X; Z) = H(X \cap Z), \cr 
    && I(Y; Z) = H(Y \cap Z), \cr 
    && I((X, Y); Z) = H((X\cup Y) \cap Z)
\end{eqnarray}
For XOR gate the former two are empty, while the last line links the original definition of synergistic information to the non-set-theoretic term of the inclusion-exclusion formula (\ref{sec1:XORincexc}) and the peculiar region of the corresponding diagram
\begin{eqnarray}\label{sec1:synergy}
    && S = I((O_1, O_2); O_3) = H((O_1\cup O_2) \cap O_3) = \Pi_s
\end{eqnarray}
Curiously, synergistic behavior of mutual information does not contradict the subadditivity of entropy. The synergistic information piece $S$ is not new to the system and is always contained in the variables' full entropy.

The nature of \emph{ghost atom} $G = \Pi_g$ is deeply connected to this outcome even though it does not explicitly participate in the decomposition. Consider the individual contributions by each of the sources $O_1, O_2$ 
\begin{eqnarray}
    && I(O_{i=1, 2}; O_3) = H(O_i) + H(O_3) - H(O_i\cup O_3)
\end{eqnarray}
Using (\ref{sec1:XORdiagrameqs}) we can rewrite this in terms of information atoms 
\begin{eqnarray}\label{sec1:ghostvssynergy}
    && I(O_i; O_3) = \Pi_s + \Pi_s - (\Pi_s + \Pi_g) = S - G = 0
\end{eqnarray}
The equality between the synergistic and ghost atoms ensures that the former is exactly cancelled from the individual contribution by each source. Synergistic information is, of course, still present in the 'whole' (\ref{sec1:synergy}). This circumstance is responsible for creating the illusion of synergy appearing out of nowhere when sources are combined.

\subsection{General trivariate decomposition}\label{sec2}

The XOR gate example studied above is a degenerate example with a sole synergistic information atom. We will now expand our approach into a system with non-synergistic components to provide a complete description of three arbitrary variables.

\paragraph{Extended random variable space}

\begin{figure}
    \begin{center}  
        \includegraphics[scale=0.42]{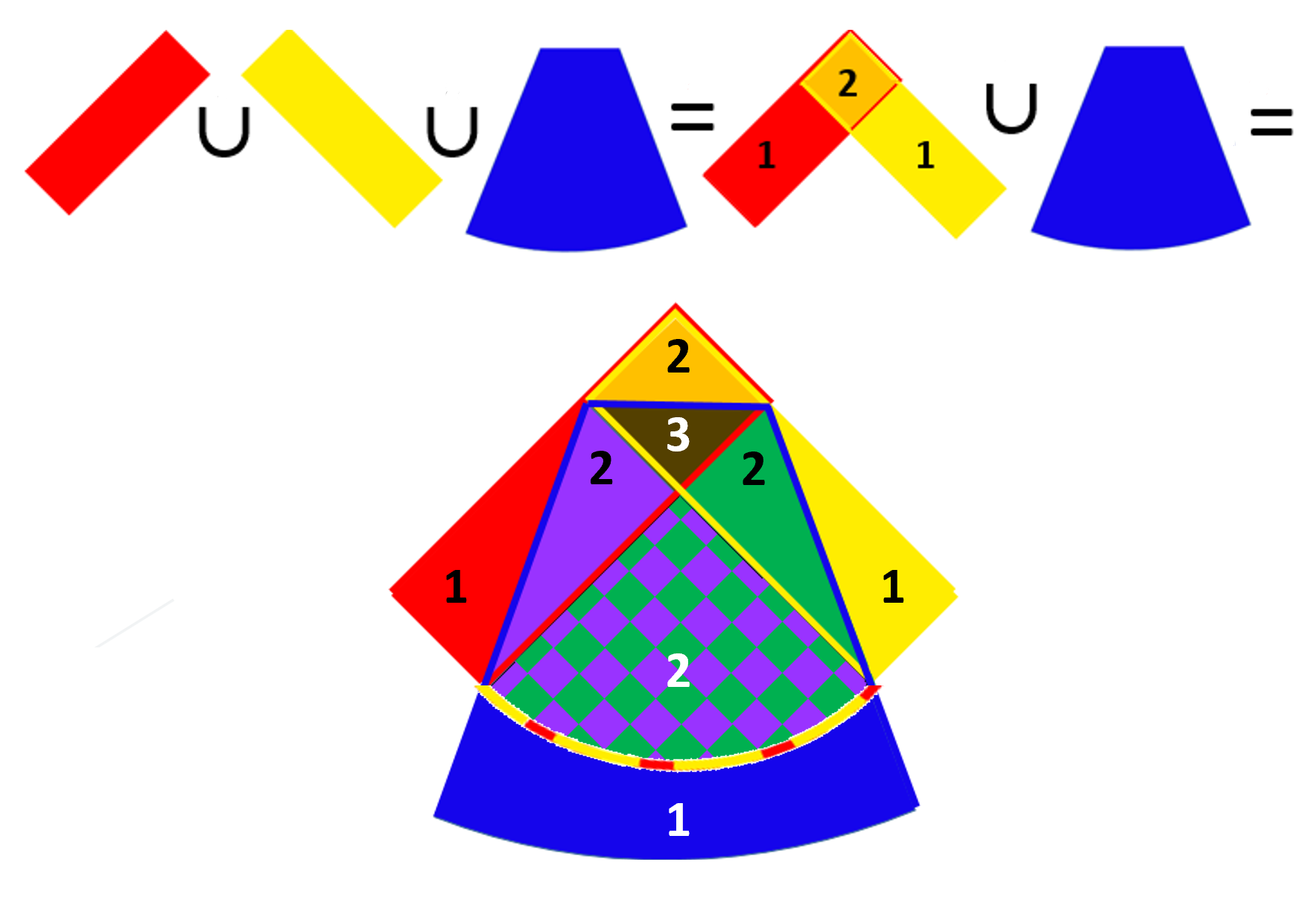}
        \caption{A single realization of inclusion-exclusion principle for three variables. The new region, corresponding to the distributivity-breaking difference is represented via a checkered pattern. Covering numbers are written for each sector and highlighted by the colors. This is not a full Venn-type diagram that defines the information atoms, hence its structure is clearly not invariant with respect to variable permutations.}
        \label{fig:3varInfoSpace}
    \end{center}
\end{figure}

The lack of a proper description for information intersections severely limits our ability to decompose the information content of more general random variable systems. Our solution for this issue is inspired by an elegant duality between set theory and information quantities found by Hu in \cite{Hu} and further elaborated in \cite{HuTheorem}. It simply extends the space of random variables to include all elements produced by operations $\cup, \cap, \backslash$ (\ref{intro:wholevspartsinfo}, \ref{sec1:intersection}, \ref{sec2:differencedef}). Entropy is extended as a (non-negative) measure $\hat H$ such that
\begin{eqnarray}\label{sec2:measure}
    && \hat H(X) = 0 \Leftrightarrow X = \emptyset, \cr 
    && X\cap Y = \emptyset \Leftrightarrow \hat H(X\cup Y) = \hat H(X) + \hat H(Y)
\end{eqnarray}

To approach the problem of characterizing information atoms in the trivariate case, we derive the corresponding inclusion-exclusion formula. As stated previously, the bivariate version (\ref{sec1:incexc2}) holds without alterations (SI, Lemma \ref{SI:lemma:inclusionexclusion2}). Now, in contrast, we get a distributivity-breaking difference term, which, to make matters even worse, depends on the order of derivation (SI, Theorem \ref{SI:theorem:incexc3}). A visualisation for one possible variant of this formula can be seen in Fig. \ref{fig:3varInfoSpace}
\begin{eqnarray}\label{incexc3}
    && \hat H(X_1\cup X_2\cup X_3) = \cr 
    && = \hat H(X_1) + \hat H(X_2) + \hat H(X_3) - \cr 
    && - \hat H(X_1\cap X_2) - \hat H(X_1\cap X_3) - \hat H(X_2\cap X_3) + \cr 
    && + \hat H(X_1\cap X_2\cap X_3) - \Delta \hat H,
\end{eqnarray}  
where for any permutation $\sigma$
\begin{eqnarray}
    && \Delta \hat H = \hat H(((X_{\sigma(1)}\cup X_{\sigma(2)})\cap X_{\sigma(3)}) \backslash ((X_{\sigma(1)}\cap X_{\sigma(3)})\cup(X_{\sigma(2)}\cap X_{\sigma(3)})))
\end{eqnarray}
and the difference is defined as 
\begin{eqnarray}\label{sec2:differencedef}
    && D = X\backslash Y \Leftrightarrow D \cap Y = \emptyset, D \cup (X\cap Y) = X
\end{eqnarray}
In general, due to subdistributivity the difference may not be unique (SI, \ref{differenceexample}). Its size, on the other hand, is fixed as $\hat H(X\backslash Y) = \hat H(X) - \hat H(X\cap Y)$.

\paragraph{Set-theoretic solution}

Before going to arbitrary variables, consider a system where distributivity axiom holds. Under such condition the setup becomes effectively equivalent to set theory. A trivariate system can therefore be illustrated by the same Venn diagram as that of $3$ sets 
\begin{eqnarray}\label{sec2:settheoreticsolution}
    && H(X_1) = \Pi_{\{1\}} + \Pi_{\{1\}\{2\}} + \Pi_{\{1\}\{3\}} + \Pi_{\{1\}\{2\}\{3\}}, \cr 
    && H(X_2) = \Pi_{\{2\}} + \Pi_{\{1\}\{2\}} + \Pi_{\{2\}\{3\}} + \Pi_{\{1\}\{2\}\{3\}}, \cr 
    && H(X_3) = \Pi_{\{3\}} + \Pi_{\{1\}\{3\}} + \Pi_{\{2\}\{3\}} + \Pi_{\{1\}\{2\}\{3\}}, \cr 
    && H(X_1, X_2) = \Pi_{\{1\}} + \Pi_{\{2\}} + \Pi_{\{1\}\{2\}} + \Pi_{\{1\}\{3\}} + \Pi_{\{2\}\{3\}} +  \Pi_{\{1\}\{2\}\{3\}}, \cr 
    && H(X_1, X_3) = \Pi_{\{1\}} + \Pi_{\{3\}} + \Pi_{\{1\}\{2\}} + \Pi_{\{1\}\{3\}} + \Pi_{\{2\}\{3\}} +  \Pi_{\{1\}\{2\}\{3\}}, \cr 
    && H(X_2, X_3) = \Pi_{\{2\}} + \Pi_{\{3\}} + \Pi_{\{1\}\{2\}} + \Pi_{\{1\}\{3\}} + \Pi_{\{2\}\{3\}} +  \Pi_{\{1\}\{2\}\{3\}}, \cr 
    && H(X_1, X_2, X_3) = \cr 
    && = \Pi_{\{1\}} + \Pi_{\{2\}} + \Pi_{\{3\}} + \Pi_{\{1\}\{2\}} + \Pi_{\{1\}\{3\}} + \Pi_{\{2\}\{3\}} +  \Pi_{\{1\}\{2\}\{3\}}
\end{eqnarray}
By calculating the sizes of atoms, we derive (SI, \ref{SI:settheor3derivation}) the criterion for their non-negativity: the whole must be less or equal to the sum of the parts 
\begin{eqnarray}\label{sec2:settheoreticsolutioncond}
    && I(X_1, X_2; X_3) - I(X_1; X_3) - I(X_2; X_3) \leq 0
\end{eqnarray}

\paragraph{Main result: arbitrary trivariate system}

\begin{figure}
    \begin{center}  
        \includegraphics[scale=0.42]{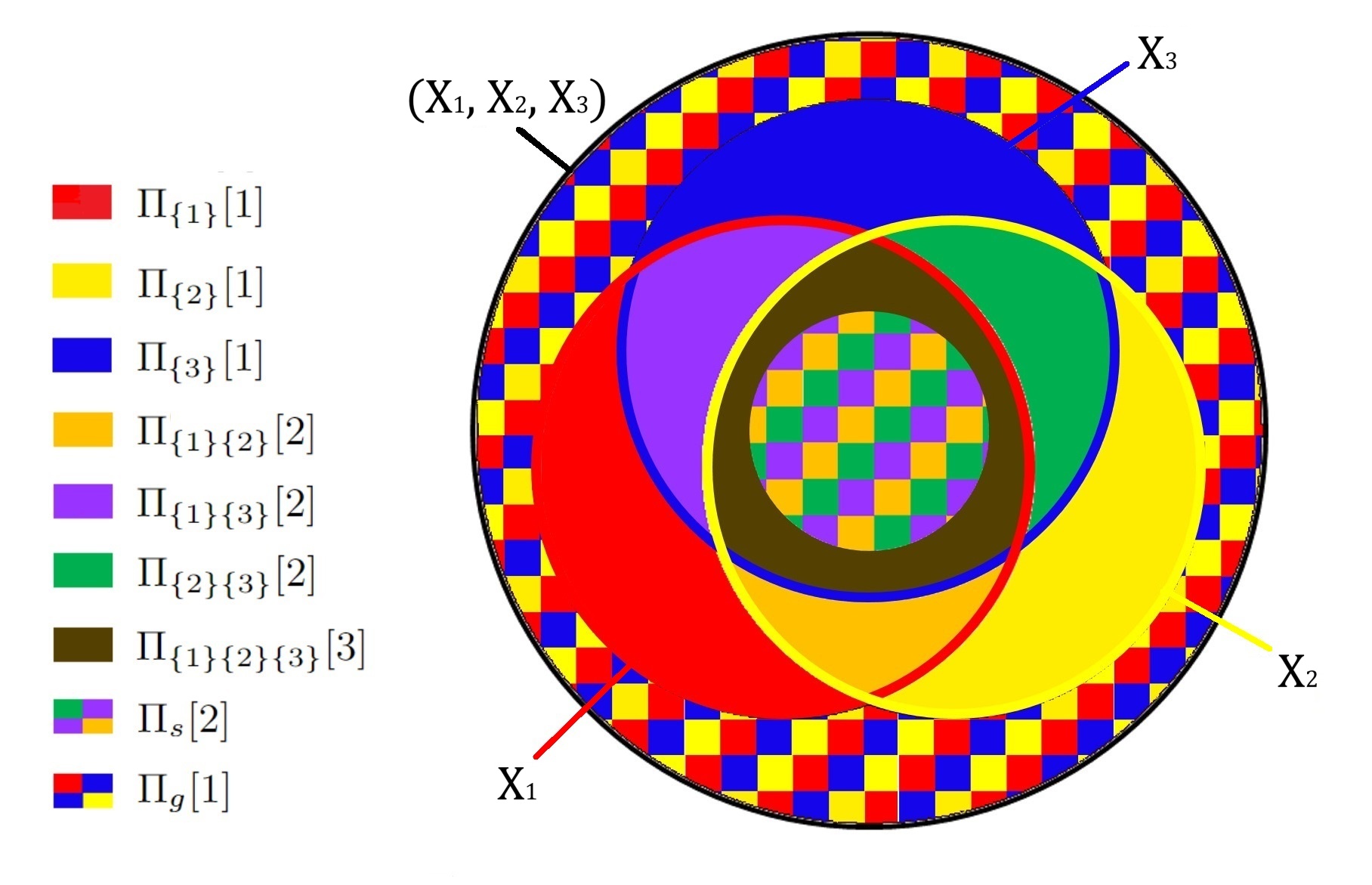}
        \caption{A graphical illustration for the general solution of the trivariate problem. Compared to the Venn diagram for $3$ sets, two new regions here are the $2$-covered part of triple intersection $\Pi_s$ (synergistic atom) and a ghost atom $\Pi_g$, which is not a part of any single initial variable. Similarly to Fig. \ref{fig:3varInfoSpace}, colors indicate the coverings: 3 primary colors (red, yellow, blue or their checkered combination) correspond to 1-covered atoms, the overlay of any 2 colors (orange, purple, green or their checkered combination) is 2-covered and the overlay of all 3 colors (brown) is 3-covered.}
        \label{fig:3PHD}
    \end{center}
\end{figure}

At this point, we have studied two opposite cases: a completely synergistic system (XOR gate) and one without any synergy (set-theoretic solution). To describe three arbitrary variables, any general decomposition must be able to replicate both of them. It turns out that a combination of the already known atoms (Fig. \ref{fig:3PHD}) suffices in providing a non-negative decomposition (presented in detail in SI, \ref{3PHDequations}; for proof see SI, Lemma \ref{SI:lemma:arbitrary3PHD})
\begin{eqnarray}
    && H(X_i) = \Pi_s + \sum_{\text{set-theor. atoms}}\Pi, \cr 
    && H(X_i, X_{j\neq i}) = \Pi_s + \Pi_g + \sum_{\text{s.t. atoms}} \Pi, \cr 
    && H(X_1, X_2, X_3) = \Pi_s + \Pi_g + \sum_{\text{s.t. atoms}} \Pi, \cr 
    && \Pi_s = \Pi_g
\end{eqnarray}
This is the minimal solution to the problem as it contains the smallest set of necessary atoms. The whole and parts are now related by the difference of two terms
\begin{eqnarray}\label{sec2:interinfoatoms}
    && I((X_1, X_2); X_3) - I(X_1; X_3) - I(X_2; X_3) = \Pi_s - \Pi_{\{1\}\{2\}\{3\}} \lesseqqgtr 0
\end{eqnarray}  
We can gain major insight by substituting the left side using the inclusion-exclusion formulas (\ref{sec1:incexc2}, \ref{incexc3}). Remember that the only $3$-covered area in the system is $X_1\cap X_2\cap X_3$. Therefore, the size of $\Pi_s$ is determined by the distributivity-breaking difference
\begin{eqnarray}
    && I((X_1, X_2); X_3) - I(X_1; X_3) - I(X_2; X_3) = \Delta \hat H - \hat H(X_1\cap X_2\cap X_3), \cr
    && \Pi_{\{1\}\{2\}\{3\}} = \hat H(X_1\cap X_2\cap X_3), \cr 
    && \Pi_s = \Delta \hat H
\end{eqnarray}
To find the physical meaning behind the recovered solution, we once again compare it to the partial information decomposition of the same system. Only four of the diagram regions (Fig. \ref{fig:3PHD}) appear in the corresponding equations
\begin{eqnarray}\label{3PIDequations}
    && I(X_1; X_3) = \Pi_{\{1\}\{2\}\{3\}} + \Pi_{\{1\}\{3\}}, \cr 
    && I(X_2; X_3) = \Pi_{\{1\}\{2\}\{3\}} + \Pi_{\{2\}\{3\}}, \cr 
    && I((X_1, X_2); X_3) = \Pi_{\{1\}\{2\}\{3\}} + \Pi_s + \Pi_{\{1\}\{3\}} + \Pi_{\{2\}\{3\}}
\end{eqnarray}
The result fully captures the structure behind Williams and Beer's definitions
\begin{eqnarray}
    && \Pi_{\{1\}\{2\}\{3\}} \equiv \text{Redundancy}, \cr 
    && \Pi_{\{1\}\{3\}} \equiv \text{Unique information in } X_1, \cr 
    && \Pi_{\{2\}\{3\}} \equiv \text{Unique information in } X_2, \cr 
    && \Pi_s \equiv \text{Synergy}
\end{eqnarray}

We have thus shown how information synergy naturally follows from set-theoretic arguments. The synergistic contribution is contained in the entropy of the parts and is precisely equal to the distributivity-breaking difference $\Delta \hat H$. The interaction responsible for the synergistic contribution is depicted on the Venn diagram as an intersection with unconventional covering number $\Pi_s$. Finally, the illusion of a whole being greater than the sum of its parts comes from the fact that the mutual information terms on the left-hand size of eqs. \ref{intro:PIDequations} do not account for all regions of the Venn-diagram (Fig. \ref{fig:3PHD}).

\subsection{Towards a multivariate information decomposition}

In this section we lay the foundation for a consistent theory of multivariate decomposition and resolve the contradictions between partial information decomposition axioms \cite{BROJ}.

\paragraph{Information atoms based on part-whole relations}

To rigorously define the information atoms, we may think of them as basic pieces of information, which make up all more complex quantities. Previously we have used the inclusion-exclusion principle to break down the entropy of the whole system into smaller parts step by step. Even without writing the formula for $N$ variables, one can find the general form of the terms participating in this process
\begin{eqnarray}\label{sec3:term}
    && \Xi[C] = \bigcap \left(\bigcup X_{i}\right)
\end{eqnarray}
The covering number $C$ is defined trivially as the number of intersecting union-brackets in (\ref{sec3:term}) and determines the sign of the associated term by the inclusion-exclusion principle. Similarly to the Möbius inversion used in set theory \cite{Mobius}, the decomposition of non-distributive space will rely on the inclusion order lattice $(L^{\Xi}, \subseteq)$ of terms $\Xi$. A general description of the decomposition through part-whole relations was proposed in \cite{BitsAndPieces} in the form of the \emph{parthood table}. It is a matrix with entries $0$ or $1$, which define whether a given atom $\Pi$ is a part of a particular larger information piece in the form of inclusion-exclusion terms (\ref{sec3:term}) 
\begin{eqnarray}\label{sec3:parthood}
    && \hat H(\Xi_i[C_i]) = \sum_j f_{ij}\Pi_j[c_j], \cr 
    && f_{ij} = 0, 1
\end{eqnarray}
The parthood table depends on the initial variables through the \emph{monotonicity} axiom, i.e. compliance with the inclusion lattice $(L^{\Xi}, \subseteq)$
\begin{eqnarray}\label{sec3:monotonicity}
    && \Xi_i \subseteq \Xi_j \Rightarrow\forall k \text{ } f_{ik} \leq f_{jk}   
\end{eqnarray}
It relates the table's entries within themselves by a simple rule: if one $\Xi$ term is included in the other, all the atoms from the decomposition of former should be present in the decomposition of the latter.

The summands $\Pi$ are non-negative functions and represent the sizes of atoms. The covering number $c_j$ of each atom is defined by the coverings of inclusion-exclusion terms $C_i$
\begin{eqnarray}\label{sec3:coverings}
    && c_j = \max_{i: f_{ij} = 1} C_i
\end{eqnarray}
This rule remains unchanged from the classical set theory.

The information conservation law (\ref{conservationlaw}) is the final condition that preserves the physical meaning of the covering numbers - the number of times the same information appears in the system.

The existence of a general solution for $N$ variables is not guaranteed. Besides, linear system (\ref{sec3:parthood}) is undetermined for $N>2$. For a specific set of degenerate cases it is, however, still possible to calculate the sizes of all atoms. We will next list several such examples while specifying how information is distributed among their different parts:

\paragraph{Set-theoretic solution for $N$ variables}

In a distributive system the solution is a particular case of Möbius inversion \cite{Mobius} (SI, \ref{SI:settheory1}-\ref{settheoreticatomscond}). Mutual information as a function of random variables becomes subadditive (SI, Lemma \ref{SI:lemma:nosynergyinsettheory}) proving that the lack of distributivity is a necessary condition for emergence.

\paragraph{XOR gate}

The solution found for XOR gate is unique in the parthood table formalism (SI, Theorem \ref{SI:theorem:XORunique}). This reinforces our proposal of synergistic and ghost atoms as physical entities.

\paragraph{$N$-parity}

Generalizing XOR gate to an arbitrary number of variables yields the $N$-parity setup. It allows a solution of the similar form (SI, \ref{Nparity}-\ref{Nparity2}) 
\begin{eqnarray}
    && \Pi_s[2] = 1 \text{ bit}, \cr 
    && \Pi_{g_{n=\overline{1, N-2}}}[1] = 1 \text{ bit}, \cr 
    && \forall n, \sigma \text{ } H(X_{\sigma(1)}, X_{\sigma(2)}, \dots X_{\sigma(n)}) = \Pi_s + \sum_{i = 2}^{n-1} \Pi_{g_i}
\end{eqnarray}

\paragraph{Resolving the partial information decomposition self-contradiction}

The existence of any multivariate decomposition was previously believed to be disproved \cite{BROJ} by employing a simple example that could not be solved without discarding one of the partial information decomposition axioms. The information inside three XOR variables $O_1, O_2, O_3$ about their joint distribution $O_4 = (O_1, O_2, O_3)$ was claimed to be grouped into three $1$-bit synergistic atoms that, using our notation, correspond to $O_1\cap (O_2 \cup O_3)\cap O_4$, $O_2\cap (O_1 \cup O_3)\cap O_4$ and $O_3\cap (O_1 \cup O_2)\cap O_4$. These were summed up to give three bits of information - more than the total of two bits present in the entire system. The authors of \cite{BROJ} concluded that the non-negativity of information was not respected.

To resolve this discrepancy, first notice that partial information decomposition atoms are a subset of of the full set of atoms $\{\Pi\}$. In the system with $N$ sources of information $X_1, \dots X_N$ and target $X_{N+1}$ they lie inside the intersection $I(X_1, \dots, X_N; X_{N+1}) = \hat H((X_1\cup\dots\cup X_N)\cap X_{N+1})$ and are defined by the submatrix of the full parthood table $f_{ij}: \Xi_i \subseteq (X_1\cup X_2\cup\dots X_N)\cap X_{N+1}$. In particular, when the output is equal to the joint distribution of all inputs, the entropy of inputs coincides with mutual information and hence all atoms appear in the partial information decomposition (SI, Lemma \ref{SI:lemma:BROJresolution}). The set of atoms $\{\Pi\}$ itself is then identical to that of the system $X_1, \dots X_N$ alone with the exception of all covering numbers being increased by one to comply with the additional cover of $X_{N+1}$ (SI, Theorem \ref{SI:theorem:BROJresolution}). This is exactly the type of system that was used in \cite{BROJ}. Using the solution of XOR gate, we find
\begin{eqnarray}
    && \Pi_s[3] = 1 \text{ bit}, \cr 
    && \Pi_g[2] = 1 \text{ bit}, \cr 
    && I(O_1; O_4) = \Pi_s, \cr 
    && I(O_2; O_4) = \Pi_s, \cr 
    && I(O_3; O_4) = \Pi_s, \cr 
    && I((O_1, O_2); O_4) = \Pi_s + \Pi_g, \cr
    && I((O_1, O_3); O_4) = \Pi_s + \Pi_g, \cr
    && I((O_2, O_3); O_4) = \Pi_s + \Pi_g, \cr
    && I((O_1, O_2, O_3); O_4) = \Pi_s + \Pi_g
\end{eqnarray}
In place of three, there is only one symmetric atom $\Pi_s[3]$. The confusion in \cite{BROJ} occurred since different forms of the inclusion-exclusion principle were considered separately and it was assumed that each version would create its own synergistic atom. 

\section{Discussion}

Previous attempts for studying synergistic information using set-theoretic intuition have led to self-contradictions. In this work we point out that the non-distributivity of random variables corresponds to a well-defined variant of set-theory. We employ our results to construct a Venn-like diagram for an arbitrary $3$-variable system and demonstrate how synergism to be a direct consequence of distributivity breaking.

Our results do not fully solve the problem at hand. First, precise calculation of atom sizes was left unanswered and might require a more explicit description of information intersections. Another caveat is that although we constructed the equations that describe a self-consistent multivariate information decomposition, the existence of solution for $N$ arbitrary random variables is yet to be proven.

Nevertheless, this work lays the basis for a self-consistent multivariate theory. Our analysis reestablishes the concept of information decompositions as a foundation for further enquiry in quantifying emergence. In this context, information theory serves as a mere illustration: the  mechanism we describe offers an explanation of the nature of synergy which uses solely set-theoretic concepts and can be applied to any emergent physical system.

\backmatter

\bmhead{Acknowledgements}
We wish to thank Rotem Shalev, Prof. Amir Shpilka and Prof. Gregory Falkovich  for their insightful comments. This work has received funding from the European Research Council (ERC) under the European Union’s Horizon 2020 research and innovation program (grant agreement No. 770964), the Israeli Science Foundation grant 1727/20, and the Minerva Foundation. O.F. is the incumbent of the Henry J Leir Professorial chair.

\bibliography{sn-bibliography}

\newpage

\textbf{Supplementary information}

\begin{appendices}

\section{Properties of the extended random variable (RV) space}

To overcome the limitations of the random variable space, we extend it by adding all necessary elements to make set-theoretic operations well-defined (see Section \ref{sec2}). The extended RV space has operations of union $\cup$ and intersection $\cap$ obeying a set of axioms:
\begin{eqnarray} 
    \forall X, Y, Z
\end{eqnarray}
\begin{align*}
    X \cup \emptyset&= X & X \cap \emptyset &= \emptyset\\
    X \cup X &= X  & X \cap X &= X\\
    X \cup Y &= Y \cup X & X \cap Y &= Y \cap X\\ 
    (X\cup Y) \cup Z &= X\cup (Y \cup Z) & (X \cap Y) \cap Z &= X \cap (Y \cap Z)
\end{align*}
The relation of subset is defined as
\begin{eqnarray}\label{SI:subsetdef}
    && X\subseteq Y \Leftrightarrow X\cap Y = X \Leftrightarrow X\cup Y = Y 
\end{eqnarray}

While the postulate of distributivity is independent of the other axioms in set theory, a weaker condition ought to hold even without it:
\begin{lemma}\label{SI:lemma:subdistributivity}
    As long as the relation of subset is well-defined (\ref{SI:subsetdef}), the subdistributivity of union over intersection holds for any three variables 
    \begin{eqnarray}
        && (X \cap Z) \cup (Y \cap Z) \subseteq (X \cup Y) \cap Z
    \end{eqnarray}
\end{lemma}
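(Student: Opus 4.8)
The plan is to prove this purely order-theoretically, treating the extended RV space as a lattice under $\cup$ (join) and $\cap$ (meet). The statement is exactly the standard distributive \emph{inequality} $(Z\cap X)\cup(Z\cap Y)\subseteq Z\cap(X\cup Y)$ that holds in every lattice, so I would reduce it to two monotonicity-type inclusions combined by a least-upper-bound argument. Throughout I would use only the idempotence, commutativity and associativity axioms together with the well-definedness of the subset relation, i.e. the equivalence $X\subseteq Y \Leftrightarrow X\cap Y = X \Leftrightarrow X\cup Y = Y$ from (\ref{SI:subsetdef}).

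First I would record the two auxiliary facts that make the order behave like a lattice order. For \emph{absorption}, note that $X\cup(X\cup Y) = (X\cup X)\cup Y = X\cup Y$, so the join-characterization of $\subseteq$ gives $X\subseteq X\cup Y$; feeding this into the meet-characterization yields $X\cap(X\cup Y)=X$, and symmetrically $Y\cap(X\cup Y)=Y$. For \emph{upper bounds of joins}, if $A\subseteq C$ and $B\subseteq C$ then $(A\cup B)\cup C = A\cup(B\cup C)=A\cup C=C$, so $A\cup B\subseteq C$, using only associativity and the join-characterization. I expect absorption to be the one genuinely subtle point: it is \emph{not} a consequence of idempotence/commutativity/associativity alone, so the proof really does rely on the two characterizations of $\subseteq$ being equivalent — which is precisely the content of ``the relation of subset is well-defined.''

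With these in hand the main computation is routine. To show $X\cap Z\subseteq (X\cup Y)\cap Z$ I would verify the meet-characterization directly:
\[
(X\cap Z)\cap\bigl((X\cup Y)\cap Z\bigr) = \bigl(X\cap(X\cup Y)\bigr)\cap(Z\cap Z) = X\cap Z,
\]
regrouping by commutativity and associativity, then applying absorption $X\cap(X\cup Y)=X$ and idempotence $Z\cap Z=Z$. The same computation with $X$ replaced by $Y$ gives $Y\cap Z\subseteq (X\cup Y)\cap Z$.

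Finally I would apply the upper-bound fact with $A=X\cap Z$, $B=Y\cap Z$ and $C=(X\cup Y)\cap Z$ to conclude $(X\cap Z)\cup(Y\cap Z)\subseteq (X\cup Y)\cap Z$, which is the claim. The argument never invokes distributivity, so it holds in the non-distributive extended space exactly as required; the only place demanding care is the derivation of absorption from the defining equivalences of $\subseteq$.
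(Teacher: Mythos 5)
Your proof is correct and follows essentially the same route as the paper's: both establish the two inclusions $X\cap Z\subseteq (X\cup Y)\cap Z$ and $Y\cap Z\subseteq (X\cup Y)\cap Z$ via the meet-characterization of $\subseteq$, and combine them with the same join-upper-bound rule derived from associativity. The one difference is cosmetic but in your favor: you derive the absorption identity $X\cap(X\cup Y)=X$ explicitly from the two characterizations of $\subseteq$, whereas the paper uses it implicitly in the step $(X\cup Y)\cap(X\cap Z)=X\cap Z$.
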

\begin{proof}
    We start by showing that lemma's condition implies the following rule
    \begin{eqnarray}\label{relation1}
        && X_1\subseteq X_3, X_2\subseteq X_3 \Rightarrow (X_1\cup X_2)\subseteq X_3
    \end{eqnarray}
    Indeed, $(X_1\cup X_2)\cup X_3 = (X_1\cup X_3)\cup X_2 =$ $ X_3\cup X_2 = X_3$. 
    
    Then, for arbitrary variables $X, Y, Z$,
    \begin{eqnarray}
        && (X \cap Z) \cap ((X \cup Y) \cap Z) = (X \cup Y) \cap (Z \cap(X \cap Z)) = \cr 
        && = (X \cup Y)\cap(X \cap Z) = (X \cap Z) \Rightarrow (X \cap Z) \subseteq (X \cup Y) \cap Z,
    \end{eqnarray}
    Likewise $(Y \cap Z) \subseteq (X \cup Y) \cap Z$. By applying relation (\ref{relation1}), we get the statement of the lemma 
    \begin{eqnarray}
        && (X \cap Z) \cup (Y \cap Z) \subseteq (X \cup Y) \cap Z
    \end{eqnarray}
\end{proof}
\begin{corollary}
    The subdistributivity (or, more correctly, superdistributivity) of intersection over union also holds 
    \begin{eqnarray}
        && (X \cap Y) \cup Z \subseteq (X \cup Z) \cap (Y\cup Z)
    \end{eqnarray}
\end{corollary}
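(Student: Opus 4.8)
The plan is to prove the corollary as the exact order-dual of Lemma \ref{SI:lemma:subdistributivity}, reusing the same two-step strategy rather than re-deriving everything from scratch. Just as the lemma was obtained by showing that $X \cap Z$ and $Y \cap Z$ both sit below $(X \cup Y) \cap Z$ and then joining them, here I would show that the two natural lower bounds of the right-hand side, namely $Z$ and $X \cap Y$, are each contained in $(X \cup Z) \cap (Y \cup Z)$, and then combine them. Throughout I would use only the idempotency, commutativity and associativity axioms together with the subset definition (\ref{SI:subsetdef}); distributivity is never invoked, which is precisely the point.

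First I would record the meet-version of the rule (\ref{relation1}) used in the lemma's proof: if $W \subseteq A$ and $W \subseteq B$, then $W \subseteq A \cap B$. This follows from a one-line computation, $W \cap (A \cap B) = (W \cap A) \cap B = W \cap B = W$, using associativity together with $W \cap A = W$ and $W \cap B = W$. I would also note the elementary inclusions $Z \subseteq X \cup Z$ and $Z \subseteq Y \cup Z$ (immediate from $Z \cup (X \cup Z) = X \cup Z$ by idempotency and associativity of $\cup$), as well as $X \cap Y \subseteq X$ and $X \cap Y \subseteq Y$ (immediate from $(X \cap Y) \cap X = X \cap Y$ by idempotency of $\cap$), which combine with $X \subseteq X \cup Z$, $Y \subseteq Y \cup Z$ and transitivity of $\subseteq$ to give $X \cap Y \subseteq X \cup Z$ and $X \cap Y \subseteq Y \cup Z$.

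Applying the meet-rule twice then yields the two key containments $Z \subseteq (X \cup Z) \cap (Y \cup Z)$ and $(X \cap Y) \subseteq (X \cup Z) \cap (Y \cup Z)$. Finally I would invoke relation (\ref{relation1}) itself, the join-rule stating that a union of two elements lying below a common upper bound again lies below it, to conclude that $(X \cap Y) \cup Z \subseteq (X \cup Z) \cap (Y \cup Z)$, which is exactly the statement of the corollary.

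I expect no substantive obstacle here; the work is essentially bookkeeping. The one point requiring care is that the lemma only established the join-version of (\ref{relation1}), so its dual meet-version must be derived separately (as above), and every elementary inclusion must be traced back to the self-dual axioms rather than quietly importing an absorption or distributivity identity. The conceptual reason the result is merely a corollary is that swapping $\cup \leftrightarrow \cap$ and reversing $\subseteq$ carries the listed axioms to themselves and carries the lemma's statement to this one, so the entire argument transports verbatim under duality.
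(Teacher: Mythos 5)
Your proof is correct and is exactly the argument the paper intends: the corollary is stated without proof precisely because it is the order-dual of Lemma~\ref{SI:lemma:subdistributivity}, and your two-step strategy (establish the meet-version of rule~(\ref{relation1}), show $Z$ and $X\cap Y$ each lie below both $X\cup Z$ and $Y\cup Z$, then combine) transports the lemma's proof verbatim. The only minor imprecision is the claim that \emph{all} listed axioms are self-dual --- the two axioms involving $\emptyset$ are not --- but since neither the lemma's proof nor yours ever invokes $\emptyset$, this does not affect the argument.
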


In the extended RV space the inclusion-exclusion principle for $2$ variables, as expected, remains unaffected by the lack of distributivity
\begin{lemma}\label{SI:lemma:inclusionexclusion2}
    The size of the union of two extended RV space members is related to their own sizes and the size of their intersection as
    \begin{eqnarray}\label{SI:incexc2}
        && \hat H(X\cup Y) =  \hat H(X) +  \hat H(Y) -  \hat H( X\cap  Y)
    \end{eqnarray}
\end{lemma}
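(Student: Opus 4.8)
The plan is to reduce the two-variable formula to the additivity axiom (\ref{sec2:measure}) by exhibiting $X\cup Y$ as a disjoint union of pieces whose sizes are already under control. The natural tool is the difference operation (\ref{sec2:differencedef}): I would fix a representative $D=X\backslash Y$, which by definition satisfies $D\cap Y=\emptyset$ and $D\cup(X\cap Y)=X$. Everything then follows from two applications of additivity, one to reconstruct $X$ and one to reconstruct $X\cup Y$.

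First I would pin down the size of the difference itself, deriving it from scratch rather than quoting $\hat H(X\backslash Y)=\hat H(X)-\hat H(X\cap Y)$ from the main text, so as to keep the argument self-contained and free of any circularity. To apply additivity to $X=D\cup(X\cap Y)$ I need the two summands disjoint, i.e. $D\cap(X\cap Y)=\emptyset$. This is the point where the absence of distributivity must be handled with care: I cannot expand meets over joins, and instead lean only on monotonicity of the meet, $A\subseteq B\Rightarrow C\cap A\subseteq C\cap B$, which is provable from the idempotency, commutativity and associativity axioms alone. Since $X\cap Y\subseteq Y$, monotonicity gives $D\cap(X\cap Y)\subseteq D\cap Y=\emptyset$, hence $D\cap(X\cap Y)=\emptyset$ (the only subset of $\emptyset$ is $\emptyset$). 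Additivity then yields $\hat H(X)=\hat H(D)+\hat H(X\cap Y)$.

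Next I would rewrite the union. Using $X=D\cup(X\cap Y)$ together with the absorption $(X\cap Y)\cup Y=Y$, which is immediate from $X\cap Y\subseteq Y$, I get $X\cup Y=D\cup(X\cap Y)\cup Y=D\cup Y$ by associativity. Because $D\cap Y=\emptyset$, a second application of additivity gives $\hat H(X\cup Y)=\hat H(D)+\hat H(Y)$. Substituting $\hat H(D)=\hat H(X)-\hat H(X\cap Y)$ from the previous step produces the claimed identity $\hat H(X\cup Y)=\hat H(X)+\hat H(Y)-\hat H(X\cap Y)$.

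The only genuine obstacle is the one flagged above: establishing the disjointness and union identities without distributivity. These all reduce to monotonicity of the meet plus the elementary facts $X\cap Y\subseteq Y$ and the absorption $(X\cap Y)\cup Y=Y$, each a consequence of the lattice axioms listed for the extended space, so no distributive manipulation is ever invoked. A secondary point worth noting is that the difference $D$ need not be unique (SI, \ref{differenceexample}); the argument is insensitive to this, since it uses only the two defining relations of $D$ and never its particular value.
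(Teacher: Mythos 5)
Your proof is correct and follows essentially the same route as the paper's: both exhibit $X\cup Y$ as a disjoint union involving a set difference and apply the additivity axiom of $\hat H$ twice (the paper uses $Y\backslash X$ and decomposes $Y$, you use $X\backslash Y$ and decompose $X$ — a relabeling). Your extra care in deriving $D\cap(X\cap Y)=\emptyset$ via monotonicity of the meet matches the paper's associativity computation $((Y\backslash X)\cap X)\cap Y=\emptyset$, so nothing is missing.
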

\begin{proof}
    We will rewrite the left side $H(X\cup Y)$ as a union of two disjoint pieces
    \begin{eqnarray}
        && X\cup (Y\backslash X) = X\cup (X\cap Y)\cup (Y\backslash X) = X\cup Y
    \end{eqnarray}
    At the same time, by definition $X\cap (Y\backslash X) = \emptyset$. We may use the additivity of the measure (\ref{sec2:measure}) to write the size of union as a sum of sizes of its disjoint parts
    \begin{eqnarray}
         && \hat H(X\cup Y) =  \hat H(X) +  \hat H(Y\backslash X)
    \end{eqnarray}
    Repeating the same steps in order to decompose the second summand into two more terms concludes the proof.
    \begin{eqnarray}
        && (Y\backslash  X)\cap (X\cap  Y) = ((Y\backslash  X)\cap X)\cap  Y = \emptyset, \cr 
        && (Y\backslash X)\cup (X\cap Y) = Y, \cr 
        &&  \hat H(Y) = \hat H(Y\backslash X) + \hat H(X \cap Y)
    \end{eqnarray}
\end{proof}

The inclusion-exclusion principle for $3$ variables along with all the terms from the set-theoretic version, contains a peculiar extra term related to the failure of distributivity
\begin{theorem}\label{SI:theorem:incexc3}
    The size of triple union is related to the sizes of individual terms, their intersections and the distributivity-breaking difference $\Delta \hat H$
    \begin{eqnarray}
        &&  \hat H( X_1\cup  X_2\cup  X_3) = \cr 
        && = \hat H( X_1) +  \hat H( X_2) +  \hat H( X_3) - \cr 
        && - \hat H( X_1\cap  X_2) -  \hat H( X_1\cap  X_3) -  \hat H( X_2\cap  X_3) + \cr 
        && + \hat H( X_1\cap  X_2\cap  X_3) - \Delta \hat H,
    \end{eqnarray}  
    where the last term is invariant with respect to permutations of indices $\sigma$ and equal to 
    \begin{eqnarray}
        && \Delta \hat H = \hat H(((X_{\sigma(1)}\cup X_{\sigma(2)})\cap X_{\sigma(3)}) \backslash ((X_{\sigma(1)}\cap X_{\sigma(3)})\cup(X_{\sigma(2)}\cap X_{\sigma(3)}))) 
    \end{eqnarray}
\end{theorem}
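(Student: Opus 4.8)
The plan is to reduce the trivariate formula to two applications of the already-established bivariate inclusion-exclusion (Lemma \ref{SI:lemma:inclusionexclusion2}), isolating the single anomalous term and handling it with subdistributivity. First I would group the first two variables and treat their union as one element of the extended space, writing $\hat H(X_1\cup X_2\cup X_3) = \hat H(X_1\cup X_2) + \hat H(X_3) - \hat H((X_1\cup X_2)\cap X_3)$, and then expand $\hat H(X_1\cup X_2)$ by the bivariate formula. This concentrates the entire difficulty into the cross term $\hat H((X_1\cup X_2)\cap X_3)$, which is exactly where distributivity would classically be invoked.

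Second, I would decompose that cross term. By subdistributivity (Lemma \ref{SI:lemma:subdistributivity}) we have $(X_1\cap X_3)\cup(X_2\cap X_3) \subseteq (X_1\cup X_2)\cap X_3$, so the difference $((X_1\cup X_2)\cap X_3)\backslash((X_1\cap X_3)\cup(X_2\cap X_3))$ is disjoint from $(X_1\cap X_3)\cup(X_2\cap X_3)$ and, together with it, reconstitutes $(X_1\cup X_2)\cap X_3$. Additivity of the measure (\ref{sec2:measure}) then gives $\hat H((X_1\cup X_2)\cap X_3) = \hat H((X_1\cap X_3)\cup(X_2\cap X_3)) + \Delta\hat H$. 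Applying the bivariate formula once more to $(X_1\cap X_3)\cup(X_2\cap X_3)$ and using the idempotence/associativity identity $(X_1\cap X_3)\cap(X_2\cap X_3) = X_1\cap X_2\cap X_3$ turns this into $\hat H(X_1\cap X_3) + \hat H(X_2\cap X_3) - \hat H(X_1\cap X_2\cap X_3) + \Delta\hat H$. Substituting back and collecting terms yields the claimed formula.

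Finally, for the permutation invariance I would avoid recomputing $\Delta\hat H$ for each $\sigma$. The derivation above corresponds to one particular ordering; repeating it for an arbitrary permutation merely replaces $\Delta\hat H$ by $\Delta\hat H^{(\sigma)}$ while leaving every other term untouched. Since the left-hand side $\hat H(X_1\cup X_2\cup X_3)$ and all remaining terms (sums of sizes of the symmetric intersections) are manifestly invariant under relabeling of the indices, the equation forces $\Delta\hat H^{(\sigma)}$ to equal the same symmetric combination for every $\sigma$. This establishes invariance without any direct manipulation of the difference sets themselves.

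The main obstacle I anticipate is the bookkeeping around the non-uniqueness of the difference operation: since $X\backslash Y$ need not be a unique element of the extended space, I must argue throughout in terms of sizes rather than sets. The enabling fact is that $\hat H(X\backslash Y) = \hat H(X) - \hat H(X\cap Y)$ is nonetheless well-defined, which follows because any valid difference $D$ is disjoint from $X\cap Y$ (as $D\cap Y = \emptyset$ and $X\cap Y\subseteq Y$) and satisfies $D\cup(X\cap Y) = X$ by definition (\ref{sec2:differencedef}). With this in hand, the disjointness required for the additivity step is guaranteed, and the entire argument proceeds at the level of the measure $\hat H$.
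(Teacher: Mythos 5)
Your proposal is correct and follows essentially the same route as the paper's proof: group $(X_1\cup X_2)\cup X_3$, apply the bivariate inclusion-exclusion twice, split the cross term $\hat H((X_1\cup X_2)\cap X_3)$ via subdistributivity and the additivity axiom into $\hat H((X_1\cap X_3)\cup(X_2\cap X_3)) + \Delta\hat H$, and deduce permutation invariance of $\Delta\hat H$ from the symmetry of every other term in the identity. Your added care about the non-uniqueness of $X\backslash Y$ and the well-definedness of its size is a welcome refinement the paper only notes in passing.
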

\begin{proof}
    We begin by choosing two of three variables (or extended RV space members) on the left side and grouping them in order to use the result of the previous lemma (\ref{SI:incexc2}).
    \begin{eqnarray}\label{SI1}
        &&  \hat H(( X_1\cup  X_2)\cup  X_3) = \hat H( X_1\cup  X_2) +  \hat H( X_3) - \hat H(( X_1\cup  X_2)\cap  X_3)
    \end{eqnarray} 
    The first term is easily decomposed further using Lemma \ref{SI:lemma:inclusionexclusion2}. In order to proceed with the third term we define the distributivity-breaking difference as 
    \begin{eqnarray}
        && \Delta X_{123} = \big(( X_1\cup X_2)\cap  X_3\big)\backslash\big(( X_1\cap X_3)\cup( X_2\cap X_3)\big),
    \end{eqnarray}
    and apply the second axiom of the measure 
    \begin{eqnarray}\label{SI2}
        &&  \hat H(( X_1\cup  X_2)\cap  X_3) = \hat H((( X_1\cap  X_3)\cup ( X_2\cap  X_3))\cup \Delta X_{123}) = \cr 
        && = \hat H(( X_1\cap  X_3)\cup ( X_2\cap  X_3)) +  \hat H(\Delta X_{123}) 
    \end{eqnarray}
    Applying (\ref{SI:incexc2}) once again,
    \begin{eqnarray}
        &&  \hat H(( X_1\cap  X_3)\cup ( X_2\cap  X_3)) = \cr 
        && = \hat H( X_1\cap  X_3) +  \hat H( X_2\cap  X_3) - \hat H( X_1\cap  X_2\cap  X_3)
    \end{eqnarray}
    and combining everything into the final form
    \begin{eqnarray}
        &&  \hat H( X_1\cup  X_2\cup  X_3) = \cr 
        && = \hat H( X_1) +  \hat H( X_2) +  \hat H( X_3) - \cr 
        && -  \hat H( X_1\cap  X_2) -  \hat H( X_1\cap  X_3) -  \hat H( X_2\cap  X_3) + \cr 
        && + \hat H( X_1\cap  X_2\cap  X_3) -  \hat H(\Delta X_{123})
    \end{eqnarray}  
    The only term that depends on the order of putting brackets in (\ref{SI1}) is $\hat H(\Delta X_{123})$. Due to the associativity and commutativity of both union and intersection, we conclude that it is the only part of the equation that is not symmetric with respect to permutations of indices 
    \begin{eqnarray}
        && \Delta X_{123}\neq \Delta X_{132}\neq \Delta X_{231},
    \end{eqnarray}
    Its size is therefore bound to be the same in all three cases. Defining a single function equal to this value concludes the proof
    \begin{eqnarray}
        && \Delta \hat H = \hat H(\Delta X_{123})
    \end{eqnarray}
\end{proof}

The operation of taking the difference $\backslash$ in extended RV space can have more than one outcome. It can be shown already on the XOR gate example. Taking the variable that represents the whole system $W=O_1 \cup O_2 \cup O_3$, we have two candidates $O_2, O_3$ for the result of difference $W\backslash O_1$. Substituting them into the definition (\ref{sec2:differencedef}), we find out that both are valid, despite being explicitly unequal
\begin{eqnarray}\label{differenceexample}
    && O_{i=2, 3} \cap O_1 = \emptyset, \cr 
    && O_i \cup (W\cap O_1) = O_i \cup O_1 = W
\end{eqnarray}

\section{Information atoms}

A convenient notation of antichains was proposed in partial information decomposition \cite{WilliamsAndBeer,BROJ} to describe pieces of information. Let us denote each joint distribution by the collection of variables' indices
\begin{eqnarray}\label{SI:originalnodes}
    && (X_{i_1}, X_{i_2}, \dots, X_{i_m}) \rightarrow \{i_1 i_2 \dots i_m\} = \mathbf{A}
\end{eqnarray}
There is a trivial partial order $\mathbf{A} \preceq \mathbf{B} \Leftrightarrow \forall i\in \mathbf{A} \text{ } i\in \mathbf{B}$ and we can use it to represent the intersections. A set of strong antichains $\alpha \in \mathcal{A}(N)$ is taken on the above poset
\begin{eqnarray}
    && \alpha = \mathbf{A}_1\mathbf{A}_2\dots \mathbf{A}_n = \{i_{11}\dots i_{1m_1}\}\{i_{21}\dots i_{2m_2}\}\dots\{i_{n1}\dots i_{nm_n}\},
\end{eqnarray}
where all indices are chosen from $\overline{1, N}$ and never coincide $i_{ab}\neq i_{cd}$. The partial order $\preceq$ can be extended to antichains 
\begin{eqnarray}\label{SI:partialorder2}
    && \alpha \preceq \beta \Leftrightarrow \forall \mathbf{B} \in \beta \text{ } \exists \mathbf{A} \in \alpha: \mathbf{A} \preceq \mathbf{B}
\end{eqnarray}

Now a general inclusion-exclusion term (\ref{sec3:term}) in an $N$-variable system can be denoted by an antichain $\alpha \in \mathcal{A}(N)$ 
\begin{eqnarray}
    && \Xi_\alpha[C = n] = \bigcap_{j=1}^n \left(\bigcup_{k=1}^{m_j} X_{i_{jk}}\right)
\end{eqnarray}
The covering $C$ is always equal to the cardinality of the corresponding antichain (the number of brackets $\{\}$)
\begin{eqnarray}
    && C = n = |\alpha|
\end{eqnarray}
The inclusion order on $\Xi$-terms follows from the antichain order $\preceq$. Note that the latter is independent of the chosen random variables and holds for every system
\begin{eqnarray}
    && \alpha \preceq \beta \Rightarrow \Xi_\alpha \subseteq \Xi_\beta
\end{eqnarray}

This allows us to replace the first index of the parthood table $f_{ij}$ with an antichain and simplify the formulation of multivariate theory's axioms 
\begin{eqnarray}\label{axioms}
    && \hat H(\Xi_{\alpha}) = \sum_i f_{\alpha i}\Pi_i[c_i], \cr
    && \Xi_\alpha \subseteq \Xi_\beta \Rightarrow\forall i\text{ } f_{\alpha i} \leq f_{\beta i}, \cr 
    && c_i = \max_{\alpha: f_{\alpha i} = 1} |\alpha| \cr 
    && \sum_{k=1}^N H(X_k) = \sum_i c_i \Pi_i[c_i]
\end{eqnarray}

\begin{lemma}\label{SI:lemma:equalsizefunctions}
    Two inclusion-exclusion terms that are equal as members of extended RV space, have identical parthood matrix rows 
    \begin{eqnarray}
        && \Xi_\alpha = \Xi_\beta \Rightarrow \forall i: \Pi_i > 0 \text{ } f_{\alpha i} = f_{\beta i}
    \end{eqnarray}
\end{lemma}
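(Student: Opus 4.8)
The plan is to reduce the claimed equality of parthood rows to the monotonicity axiom, using the fact that equality in the extended RV space is nothing more than inclusion in both directions. First I would unpack the hypothesis $\Xi_\alpha = \Xi_\beta$. By the definition of the subset relation (\ref{SI:subsetdef}) together with idempotency of intersection ($X \cap X = X$), literal equality of the two terms gives $\Xi_\alpha \cap \Xi_\beta = \Xi_\alpha$ and $\Xi_\alpha \cap \Xi_\beta = \Xi_\beta$, i.e. simultaneously $\Xi_\alpha \subseteq \Xi_\beta$ and $\Xi_\beta \subseteq \Xi_\alpha$. The point to keep in mind is that this is a genuine inclusion in the extended RV space, which may hold even when the antichains $\alpha,\beta$ are incomparable in the variable-independent order $\preceq$; the XOR identity $\Xi_{\{12\}\{3\}} = \Xi_{\{3\}}$ is exactly such a case.

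Second, I would feed each of these two inclusions into the monotonicity axiom (\ref{axioms}), which is stated for the extended-RV inclusion $\subseteq$ and therefore applies to variable-specific coincidences, not only to antichain-ordered pairs. Applying it to $\Xi_\alpha \subseteq \Xi_\beta$ yields $f_{\alpha i} \le f_{\beta i}$ for every $i$, while applying it to $\Xi_\beta \subseteq \Xi_\alpha$ yields $f_{\beta i} \le f_{\alpha i}$. Combining the two inequalities gives $f_{\alpha i} = f_{\beta i}$ for all $i$, and in particular for those indices with $\Pi_i > 0$, which is the statement of the lemma. The argument in fact delivers equality for every $i$; the restriction $\Pi_i > 0$ merely singles out the atoms that carry measure and hence actually contribute to the decomposition $\hat H(\Xi_\alpha) = \sum_i f_{\alpha i}\Pi_i$, where zero-size atoms are invisible anyway.

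The main thing to be careful about — rather than a genuine obstacle — is justifying that the monotonicity axiom may legitimately be invoked in both directions. This rests entirely on reading $\subseteq$ in (\ref{axioms}) as the true inclusion order of $(L^{\Xi}, \subseteq)$ in the extended RV space; once that is granted, equality automatically triggers monotonicity both ways and the conclusion is immediate. I would therefore open the proof by emphasizing that the two terms are compared as members of the extended RV space, so that mutual inclusion is available, rather than as formal antichains, since it is precisely this distinction that makes the short argument valid.
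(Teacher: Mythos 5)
Your proof is correct; note that the paper states this lemma without any proof at all, and your argument is precisely the one the authors implicitly rely on. Equality in the extended RV space yields mutual inclusion via (\ref{SI:subsetdef}) and idempotency of $\cap$, and the monotonicity axiom of (\ref{axioms}) --- which, as you rightly stress, is imposed with respect to the genuine, variable-dependent inclusion order on $\Xi$-terms and not merely the antichain order $\preceq$ (this is exactly what lets the lemma apply to coincidences like $\Xi_{\{12\}\{3\}}=\Xi_{\{3\}}$ in the XOR gate) --- applied in both directions gives $f_{\alpha i}=f_{\beta i}$ for every $i$, a fortiori for those with $\Pi_i>0$.
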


\section{Examples}

\paragraph{Set-theoretic solution}

This is a complete replica of set theory, fully compliant with the distributivity axiom. For $N=3$ variables, condition (\ref{sec2:settheoreticsolutioncond}) is necessary and sufficient for non-negativity of all atoms
\begin{eqnarray}\label{SI:settheor3derivation}
    && \Pi_{\{1\}\{2\}\{3\}} = I(X_1; X_3) + I(X_2; X_3) - I(X_1, X_2; X_3) \geq 0, \cr 
    && \Pi_{\{i\}\{j\}} = I(X_i; X_j) - \Pi_{\{1\}\{2\}\{3\}} = I(X_i; X_j| X_{k}) \geq 0, \cr 
    && \Pi_{\{i\}} = - H(X_{j}, X_{k}) + H(X_1, X_2, X_3) \geq 0
\end{eqnarray}

For an arbitrary number of variables $N$ there is no variability in the inclusion-exclusion formulas and the atoms are recovered via the Möbius inversion with respect to the antichain order $\preceq$. Let us also denote the atoms by a special subset of antichains $\iota$ with a single index in each bracket
\begin{eqnarray}\label{SI:settheory1}
    && \hat H(\Xi_\alpha) = \sum_{\iota \preceq \alpha} \Pi_{\iota}[n], \qquad  \iota = \{i_1\}\{i_2\}\dots \{i_n\}, \cr
    && \Pi_{\iota} = \sum_{m=n}^{N} (-1)^{m-n}\sum_{i_{n+1}, \dots i_m} I_m(X_{i_1}; \dots X_{i_m}),
\end{eqnarray}
where $I_m$ is the $m^{\text{th}}$ order interaction information function
\begin{eqnarray}\label{SI:interactioninfo}
    && I_m(X_1; \dots X_m) = \sum_{k=1}^m (-1)^{k-1}\sum_{j_1, \dots j_k} H(X_{j_1}, X_{j_2}, \dots, X_{j_k})
\end{eqnarray}
Since all atoms must be non-negative, set-theoretic solution is only applicable when
\begin{eqnarray}\label{settheoreticatomscond}
    && \forall i_1\neq i_2\neq \dots \neq i_n \sum_{m=n}^{N} (-1)^{m-n}\sum_{i_{n+1}, \dots i_m} I_m(X_{i_1}; \dots X_{i_m}) \geq 0 
\end{eqnarray}

Set-theoretic systems never exhibit synergistic properties. The following result can be understood in the sense that the lack of distributivity is a necessary condition for the existence of synergy in any $N$-variable system 
\begin{lemma}\label{SI:lemma:nosynergyinsettheory}
    In set-theoretic system mutual information is always subadditive
    \begin{eqnarray}
        && I(X_1; X_{N+1}) + \dots + I(X_N; X_{N+1}) \geq I((X_1, \dots X_N); X_{N+1})
    \end{eqnarray}
\end{lemma}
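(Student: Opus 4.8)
The plan is to rewrite both sides of the inequality purely as sizes of intersections and then exploit distributivity to reduce the statement to ordinary subadditivity of the measure $\hat H$. Using $I(X; Z) = \hat H(X \cap Z)$ together with $I((X_1, \dots, X_N); X_{N+1}) = \hat H((X_1 \cup \dots \cup X_N) \cap X_{N+1})$, the claim becomes
\begin{eqnarray}
    && \hat H\left((X_1 \cup \dots \cup X_N) \cap X_{N+1}\right) \leq \sum_{i=1}^N \hat H(X_i \cap X_{N+1}).
\end{eqnarray}

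First I would use the hypothesis that the system is set-theoretic, i.e. distributivity holds, to rewrite the left-hand intersection. By distributing the union over $X_{N+1}$ --- which I would justify by induction on $N$, using associativity of $\cup$ and a single application of the distributive law at each step --- the joint intersection collapses to a union of pairwise intersections,
\begin{eqnarray}
    && (X_1 \cup \dots \cup X_N) \cap X_{N+1} = \bigcup_{i=1}^N (X_i \cap X_{N+1}).
\end{eqnarray}

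Second, I would invoke subadditivity of $\hat H$ applied to the family $A_i = X_i \cap X_{N+1}$, giving $\hat H(\bigcup_i A_i) \leq \sum_i \hat H(A_i)$, which is exactly the desired bound. Subadditivity is not an extra assumption here: for two elements it follows immediately from the bivariate inclusion-exclusion formula (Lemma \ref{SI:lemma:inclusionexclusion2}) together with non-negativity of $\hat H$, since $\hat H(A \cup B) = \hat H(A) + \hat H(B) - \hat H(A \cap B) \leq \hat H(A) + \hat H(B)$, and the general case follows by a routine induction on the number of terms.

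The only laborious steps are the two inductions; the conceptual heart of the argument --- and the point I would emphasize --- is that distributivity is precisely what permits the passage from the whole-intersection to the union of its pieces. Without it, subdistributivity (Lemma \ref{SI:lemma:subdistributivity}) yields only $\bigcup_i (X_i \cap X_{N+1}) \subseteq (X_1 \cup \dots \cup X_N) \cap X_{N+1}$, so by monotonicity the whole can be strictly larger than the union of its parts and the bound can fail. This is exactly the slack that produces synergy, so establishing the equality above under distributivity is the main (if modest) obstacle, and it is what identifies the failure of distributivity as a necessary condition for synergy.
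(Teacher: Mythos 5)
Your proof is correct, but it takes a genuinely different route from the paper's. The paper substitutes the M\"obius-inversion atom decomposition $\hat H(\Xi_\alpha)=\sum_{\iota\preceq\alpha}\Pi_{\iota}$ of the set-theoretic solution (\ref{SI:settheory1}) into both sides and then argues combinatorially on antichains: any singleton-bracket antichain $\iota\preceq\{12\dots N\}\{N+1\}$ must contain some $\{i_a\}$ with $i_a\le N$ together with $\{N+1\}$, hence $\iota\preceq\{i_a\}\{N+1\}$, so every atom counted on the right-hand side also appears at least once on the left, and non-negativity of the atoms $\Pi_\iota$ finishes the argument. You instead work directly with the measure: distributivity collapses $(X_1\cup\dots\cup X_N)\cap X_{N+1}$ into $\bigcup_{i}(X_i\cap X_{N+1})$, and subadditivity of $\hat H$ --- itself an immediate consequence of Lemma \ref{SI:lemma:inclusionexclusion2} and non-negativity --- gives the bound. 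Your version is more elementary and needs strictly weaker hypotheses: it does not presuppose that the M\"obius-inverted atoms are all non-negative, which the paper lists as a separate applicability condition (\ref{settheoreticatomscond}) for the set-theoretic solution and on which its proof silently relies. What the paper's route buys is compatibility with the parthood-table formalism: it exhibits the inequality as an overcounting of redundancy atoms on the left and shows directly that no synergy atom can appear, whereas your route isolates the single point where distributivity enters --- which is, as you note, the conceptual message of the lemma. The two inductions you defer are routine and close without difficulty.
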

\begin{proof}
    Let us substitute the atoms (\ref{SI:settheory1}) into the inequality. 
    \begin{eqnarray}\label{SI:sum}
        && \sum_{k=1}^N\sum_{\iota' \preceq \{k\}\{N+1\}}\Pi_{\iota'} \geq \sum_{\iota \preceq \{12\dots N\}\{N+1\}}\Pi_{\iota}
    \end{eqnarray}
    For any atom on the right side, we have by (\ref{SI:partialorder2})
    \begin{eqnarray}
        && \iota \preceq \{12\dots N\}\{N+1\} \Rightarrow \cr 
        && \Rightarrow \exists \{i_a\}, \{i_b\} \in \iota:  \begin{cases} \{i_a\}\preceq \{12\dots N\} \\ \{i_b\}\preceq \{N+1\} \end{cases}
    \end{eqnarray}
    Since $\iota$ is composed of single indices, we have 
    \begin{eqnarray}
        && i_a = \overline{1, N}, \cr 
        && i_b = N+1
    \end{eqnarray}
    Then such term can also be found on the left side of (\ref{SI:sum})
    \begin{eqnarray}
        && \exists \iota' \preceq \{i_a\}\{N+1\}: \iota = \iota' 
    \end{eqnarray}
    The non-negativity of all atoms concludes the proof.
\end{proof}

\paragraph{XOR gate}

The XOR gate contains a completely different set of atoms. With three mutually independent initial variables, the set of inclusion-exclusion terms simplifies to  
\begin{align*}
    \forall i\neq j\neq k & & & \\
    \Xi_{\{123\}}[1] &=  O_1\cup O_2\cup O_3 &  \hat H(\Xi_{\{123\}}) &= 2, \\
    \Xi_{\{ij\}}[1] &=  O_i\cup O_j &  \hat H(\Xi_{\{ij\}}) &= 2, \\
    \Xi_{\{i\}}[1] &=  O_i &  \hat H(\Xi_{\{i\}}) &= 1, \\ 
    \Xi_{\{ij\}\{k\}}[2] &= ( O_i\cup O_j)\cap O_k &  \hat H(\Xi_{\{ij\}\{k\}}) &= 1
\end{align*}

In the extended RV space $\Xi_{\{ij\}\{k\}} = \Xi_{\{k\}}, \Xi_{\{ij\}} = \Xi_{\{123\}}$ and by Lemma \ref{SI:lemma:equalsizefunctions} we only need to find decompositions of $\Xi_{\{i\}}$ and $\Xi_{\{123\}}$. Due to the symmetry of the problem, decomposition of $\Xi_{\{i\}}$ may contain three types of atoms: $3$ distinct atoms $\Pi_{i}$, each being a part of only the respective $\Xi_{i}$; $3$ distinct atoms $\Pi_{i, j}$, each being a part of both specified terms; or one symmetrically shared $\Pi_s$ as we have guessed in (\ref{sec1:XORsatomdef})
\begin{eqnarray}\label{derivation31}
    && \hat H(\Xi_{\{ij\}\{k\}}[2]) =  \hat H(\Xi_{\{k\}}[1]) = \cr 
    && = \Pi_{k}[2] + \Pi_{i, k}[2] + \Pi_{j, k}[2] + \Pi_s[2]
\end{eqnarray}
The coverings are calculated by definition (\ref{axioms}). For $\Xi_{\{123\}}$ one more atom $\Pi_g$ may be added
\begin{eqnarray}\label{derivation32}
    &&  \hat H(\Xi_{\{ij\}}[1]) =  \hat H(\Xi_{\{123\}}[1]) = \Pi_{1}[2] + \Pi_{2}[2] + \Pi_{3}[2] + \cr 
    && + \Pi_{1, 2}[2] + \Pi_{2, 3}[2] + \Pi_{1, 3}[2] + \Pi_s[2] + \Pi_g[1]
\end{eqnarray}
The following parthood table contains columns for all atoms discussed above. 
\begin{center}
\begin{tabular}{|c|c|c|c|c|c|c|c|c|}
     \hline
       $f$ & $\Pi_s$ & $\Pi_g$ & $\Pi_{1}$ & $\Pi_{2}$ & $\Pi_{3}$ & $\Pi_{1, 2}$ & $\Pi_{1, 3}$ & $\Pi_{2, 3}$ \\ [0.5ex] 
     \hline
     $\{1\}\{2\}\{3\}$ & 0 & 0 & 0 & 0 & 0 & 0 & 0 & 0 \\ 
     \hline
     $\{1\}\{2\}$ & 0 & 0 & 0 & 0 & 0 & 0 & 0 & 0 \\
     \hline
     $\{1\}\{3\}$ & 0 & 0 & 0 & 0 & 0 & 0 & 0 & 0 \\ 
     \hline
     $\{2\}\{3\}$ & 0 & 0 & 0 & 0 & 0 & 0 & 0 & 0 \\ 
     \hline
     $\{12\}\{3\}$ & 1 & 0 & 0 & 0 & 1 & 0 & 1 & 1 \\ 
     \hline
     $\{13\}\{2\}$ & 1 & 0 & 0 & 1 & 0 & 1 & 0 & 1 \\ 
     \hline
     $\{23\}\{1\}$ & 1 & 0 & 1 & 0 & 0 & 1 & 1 & 0 \\ 
     \hline
     $\{1\}$ & 1 & 0 & 1 & 0 & 0 & 1 & 1 & 0 \\ 
     \hline
     $\{2\}$ & 1 & 0 & 0 & 1 & 0 & 1 & 0 & 1 \\ 
     \hline
     $\{3\}$ & 1 & 0 & 0 & 0 & 1 & 0 & 1 & 1 \\ 
     \hline
     $\{12\}$ & 1 & 1 & 1 & 1 & 1 & 1 & 1 & 1 \\ 
     \hline
     $\{13\}$ & 1 & 1 & 1 & 1 & 1 & 1 & 1 & 1 \\ 
     \hline
     $\{23\}$ & 1 & 1 & 1 & 1 & 1 & 1 & 1 & 1 \\ 
     \hline
     $\{123\}$ & 1 & 1 & 1 & 1 & 1 & 1 & 1 & 1 \\ 
     \hline
    \end{tabular}
\end{center}
In a symmetric solution, the atom sizes are invariant with respect to index permutations, hence let
\begin{eqnarray}
    && \Pi_s = x, \cr 
    && \Pi_{i, k} = y, \cr 
    && \Pi_{i} = 1-2y-x, \cr 
    && \Pi_g = 2 - x - 3y - 3(1-2y-x) = \cr 
    && = 2x + 3y - 1
\end{eqnarray}
Substituting this into the information conservation law,
\begin{eqnarray}
    && \sum_i H(X_i) = 3 = 2x + 2*3y + 2*3*(1-2y-x) +  \cr 
    && + 2x + 3y - 1 = 5 - 2x - 3y, \cr 
    && 2x + 3y = 2
\end{eqnarray}
But we know that all atoms have non-negative sizes, which means that
\begin{eqnarray}
    && \Pi_{i} = 1-2y-x = -0.5y \geq 0, \cr 
    && y = 0, \cr 
    && x = 1
\end{eqnarray}
\begin{theorem}\label{SI:theorem:XORunique}
    The XOR gate has a unique symmetric decomposition
    \begin{eqnarray}
        && \Pi_s[2] = 1 \text{ bit}, \cr
        && \Pi_g[1] = 1 \text{ bit}
    \end{eqnarray}
\end{theorem}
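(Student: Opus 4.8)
The plan is to convert the parthood-table axioms (\ref{axioms}) into a small linear system and show that non-negativity leaves exactly one solution. First I would use that, in the extended RV space, mutual independence of the XOR coins collapses the distinct inclusion-exclusion terms: $\Xi_{\{ij\}\{k\}} = \Xi_{\{k\}}$ and $\Xi_{\{ij\}} = \Xi_{\{123\}}$. By Lemma \ref{SI:lemma:equalsizefunctions} equal terms have identical parthood rows, so it suffices to decompose $\Xi_{\{i\}}$ (size $1$) and $\Xi_{\{123\}}$ (size $2$); this is precisely the table assembled above.

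Second, I would fix the admissible columns. Monotonicity (\ref{sec3:monotonicity}) forces every atom contained in some $\Xi_{\{i\}}$ to also lie in the larger $\Xi_{\{123\}}$, while the latter may host one further atom, the ghost $\Pi_g$, appearing in no single-variable term. Imposing invariance under permutations of the three coins then reduces the unknown sizes to $x = \Pi_s$, $y = \Pi_{i,j}$ and $\Pi_i$, with coverings read off from $c_j = \max_{\alpha: f_{\alpha j} = 1}|\alpha|$, namely $c_s = c_{i,j} = c_i = 2$ and $c_g = 1$.

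Third, I would extract the constraints. Decomposing $\Xi_{\{i\}}$ gives $x + 2y + \Pi_i = 1$, i.e. $\Pi_i = 1 - x - 2y$, and decomposing $\Xi_{\{123\}}$ gives $\Pi_g = 2x + 3y - 1$. Substituting every size and covering into the information conservation law (\ref{conservationlaw}) with $\sum_i H(O_i) = 3$ collapses to the single relation $2x + 3y = 2$.

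Finally, non-negativity does the decisive work: eliminating $x$ via $x = 1 - \tfrac{3}{2} y$ turns $\Pi_i = 1 - x - 2y$ into $\Pi_i = -\tfrac{1}{2} y$, so $\Pi_i \geq 0$ and $y \geq 0$ can hold only for $y = 0$, whence $x = 1$, $\Pi_i = 0$ and $\Pi_g = 1$. I expect the main obstacle to be not the arithmetic but the justification of the table's columns: one must argue that symmetry together with monotonicity genuinely exhausts the admissible atoms, ruling out any extra column that could soak up information and destroy uniqueness. Once the columns are pinned down the system is linear and its unique non-negative solution is forced.
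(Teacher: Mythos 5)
Your proposal follows essentially the same route as the paper's own derivation: collapsing the inclusion-exclusion terms via Lemma \ref{SI:lemma:equalsizefunctions}, using symmetry to reduce the admissible atoms to $\Pi_s$, $\Pi_{i,j}$, $\Pi_i$, $\Pi_g$, and then combining the linear relations with the information conservation law and non-negativity to force $y=0$, $x=1$. The arithmetic matches the paper's exactly, and your closing caveat about justifying the exhaustiveness of the columns is the same informal step the paper also leaves to symmetry.
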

\begin{center}
\begin{tabular}{|c|c|c|}
     \hline
       $f$ & $\Pi_s$ & $\Pi_g$ \\ [0.5ex] 
     \hline
     $\{1\}\{2\}\{3\}$ & 0 & 0 \\ 
     \hline
     $\{1\}\{2\}$ & 0 & 0 \\
     \hline
     $\{1\}\{3\}$ & 0 & 0 \\ 
     \hline
     $\{2\}\{3\}$ & 0 & 0 \\ 
     \hline
     $\{12\}\{3\}$ & 1 & 0 \\ 
     \hline
     $\{13\}\{2\}$ & 1 & 0 \\ 
     \hline
     $\{23\}\{1\}$ & 1 & 0 \\ 
     \hline
     $\{1\}$ & 1 & 0 \\ 
     \hline
     $\{2\}$ & 1 & 0 \\ 
     \hline
     $\{3\}$ & 1 & 0 \\ 
     \hline
     $\{12\}$ & 1 & 1 \\ 
     \hline
     $\{13\}$ & 1 & 1 \\ 
     \hline
     $\{23\}$ & 1 & 1 \\ 
     \hline
     $\{123\}$ & 1 & 1 \\ 
     \hline
    \end{tabular}
\end{center}

\paragraph{N-parity}

A generalization of XOR gate is the $N$-parity setup, also a symmetric system, for which one of the variables is fully determined by the combination of all the others
\begin{eqnarray}\label{Nparity}
    && X_{\overline{1, N}} = \begin{cases} 0, & 50\% \\ 1, & 50\% \end{cases}, \cr 
    && \forall i=\overline{1, N}\text{ } X_i \equiv \sum_{j\neq i} X_j \mod 2
\end{eqnarray}
The inclusion-exclusion terms coinciding with the entropies are
\begin{eqnarray}\label{term1}
    && \Xi_{\{i_1i_2\dots i_n\}}[1] = \bigcup_{k=\overline{1, n}} X_{i_k}, \cr 
    && \hat H(\Xi_{\{i_1i_2\dots i_n\}}) = \min(n, N-1)
\end{eqnarray}
From this one can see that dividing all $N$ variables into two sets and intersecting corresponding joint distributions gives us a $1$-bit $2$-covered term.
\begin{eqnarray}\label{term2}
    && \Xi_{\{i_1i_2\dots i_n\}\{i_{n+1}\dots i_N\}}[1] = \left(\bigcup_{k=\overline{1, n}} X_{i_k}\right)\cap\left(\bigcup_{l=\overline{n+1, N}} X_{i_l}\right), \cr 
    &&  \hat H(\Xi_{\{i_1i_2\dots i_n\}\{i_{n+1}\dots i_N\}}) = 1
\end{eqnarray}
The rest of inclusion-exclusion terms are empty. A solution can be easily guessed: a single symmetric $2$-covered atom $\Pi_s[2]=1$ and a set of $N-2$ ghost atoms $\Pi_{g_k}[1]=1, k=\overline{1, N-2}$, such that
\begin{eqnarray}
    && \hat H(\Xi_{\{i_1i_2\dots i_n\}\{i_{n+1}\dots i_N\}}) = \Pi_s, \cr 
    && \hat H(\Xi_{\{i_1i_2\dots i_n\}}) = \Pi_s + \sum_{k=1}^{n-2}\Pi_{g_k}
\end{eqnarray}
We immediately see that the information conservation law is satisfied
\begin{eqnarray}\label{Nparity2}
    && \sum_i H(X_i) = N = 2\Pi_s + \sum_{k=1}^{N-2} \Pi_{g_k}
\end{eqnarray}

\paragraph{Arbitrary trivariate system}

\begin{eqnarray}\label{3PHDequations}
    && H(X_1) = \Pi_{\{1\}\{2\}\{3\}}[3] + \Pi_s[2] + \Pi_{\{1\}\{2\}}[2] + \Pi_{\{1\}\{3\}}[2] + \Pi_{\{1\}}[1], \cr 
    && H(X_2) = \Pi_{\{1\}\{2\}\{3\}}[3] + \Pi_s[2] + \Pi_{\{1\}\{2\}}[2] + \Pi_{\{2\}\{3\}}[2] + \Pi_{\{2\}}[1], \cr 
    && H(X_3) = \Pi_{\{1\}\{2\}\{3\}}[3] + \Pi_s[2] + \Pi_{\{1\}\{3\}}[2] + \Pi_{\{2\}\{3\}}[2] + \Pi_{\{3\}}[1], \cr 
    && H(X_1, X_2) = \Pi_{\{1\}\{2\}\{3\}}[3] + \Pi_s[2] + \Pi_{\{1\}\{2\}}[2] + \Pi_{\{1\}\{3\}}[2] + \Pi_{\{2\}\{3\}}[2] + \cr 
    && + \Pi_{\{1\}}[1] + \Pi_{\{2\}}[1] + \Pi_g[1], \cr 
    && H(X_1, X_3) = \Pi_{\{1\}\{2\}\{3\}}[3] + \Pi_s[2] + \Pi_{\{1\}\{2\}}[2] + \Pi_{\{1\}\{3\}}[2] + \Pi_{\{2\}\{3\}}[2] + \cr 
    && + \Pi_{\{1\}}[1] + \Pi_{\{3\}}[1] + \Pi_g[1], \cr 
    && H(X_2, X_3) = \Pi_{\{1\}\{2\}\{3\}}[3] + \Pi_s[2] + \Pi_{\{1\}\{2\}}[2] + \Pi_{\{1\}\{3\}}[2] + \Pi_{\{2\}\{3\}}[2] + \cr 
    && + \Pi_{\{2\}}[1] + \Pi_{\{3\}}[1] + \Pi_g[1], \cr 
    && H(X_1, X_2, X_3) = \Pi_{\{1\}\{2\}\{3\}}[3] + \Pi_s[2] + \Pi_{\{1\}\{2\}}[2] + \Pi_{\{1\}\{3\}}[2] + \Pi_{\{2\}\{3\}}[2] + \cr 
    && + \Pi_{\{1\}}[1] + \Pi_{\{2\}}[1] + \Pi_{\{3\}}[1] + \Pi_g[1], \cr 
    && \Pi_s[2] = \Pi_g[1]
\end{eqnarray}

\begin{center}
\resizebox{\columnwidth}{!}{
\begin{tabular}{|c|c|c|c|c|c|c|c|c|c|}
     \hline
       $f$ & $\Pi_{\{1\}\{2\}\{3\}}$ & $\Pi_s$ & $\Pi_{\{1\}\{2\}}$ & $\Pi_{\{1\}\{3\}}$ & $\Pi_{\{2\}\{3\}}$ & $\Pi_{\{1\}}$ & $\Pi_{\{2\}}$ & $\Pi_{\{3\}}$ & $\Pi_g$ \\ [0.5ex] 
     \hline
     $\{1\}\{2\}\{3\}$ & 1 & 0 & 0 & 0 & 0 & 0 & 0 & 0 & 0 \\ 
     \hline
     $\{1\}\{2\}$ & 1 & 0 & 1 & 0 & 0 & 0 & 0 & 0 & 0 \\
     \hline
     $\{1\}\{3\}$ & 1 & 0 & 0 & 1 & 0 & 0 & 0 & 0 & 0 \\ 
     \hline
     $\{2\}\{3\}$ & 1 & 0 & 0 & 0 & 1 & 0 & 0 & 0 & 0 \\ 
     \hline
     $\{12\}\{3\}$ & 1 & 1 & 0 & 1 & 1 & 0 & 0 & 0 & 0 \\ 
     \hline
     $\{13\}\{2\}$ & 1 & 1 & 1 & 0 & 1 & 0 & 0 & 0 & 0 \\ 
     \hline
     $\{23\}\{1\}$ & 1 & 1 & 1 & 1 & 0 & 0 & 0 & 0 & 0 \\ 
     \hline
     $\{1\}$ & 1 & 1 & 1 & 1 & 0 & 1 & 0 & 0 & 0 \\ 
     \hline
     $\{2\}$ & 1 & 1 & 1 & 0 & 1 & 0 & 1 & 0 & 0 \\ 
     \hline
     $\{3\}$ & 1 & 1 & 0 & 1 & 1 & 0 & 0 & 1 & 0 \\ 
     \hline
     $\{12\}$ & 1 & 1 & 1 & 1 & 1 & 1 & 1 & 0 & 1 \\ 
     \hline
     $\{13\}$ & 1 & 1 & 1 & 1 & 1 & 1 & 0 & 1 & 1 \\ 
     \hline
     $\{23\}$ & 1 & 1 & 1 & 1 & 1 & 0 & 1 & 1 & 1 \\ 
     \hline
     $\{123\}$ & 1 & 1 & 1 & 1 & 1 & 1 & 1 & 1 & 1 \\ 
     \hline
    \end{tabular}
    }
\end{center}

\begin{lemma}\label{SI:lemma:arbitrary3PHD}
    Any system of three random variable can be decomposed into a set of non-negative atoms (\ref{3PHDequations}).
\end{lemma}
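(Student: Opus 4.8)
The plan is to regard the parthood table of the arbitrary trivariate system as a linear system in the nine atom sizes and to prove that it always admits a non-negative solution. First I would read off the equations (\ref{3PHDequations}) and dispose of the easy atoms. Subtracting each pair equation from the triple equation gives immediately
\begin{equation*}
    \Pi_{\{i\}} = H(X_1,X_2,X_3) - H(X_j, X_k) = H(X_i \mid X_j, X_k) \geq 0,
\end{equation*}
so the three single-index atoms are conditional entropies and are non-negative with no further work. Using the imposed constraint $\Pi_g = \Pi_s$, I would then set $s := \Pi_s = \Pi_g$ as the single remaining free parameter; the reduced system carries exactly one degree of freedom, mirroring the classical Williams--Beer bivariate setup.

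The second step is to express the remaining atoms as affine functions of $s$. Taking the difference of a pair equation and a single-variable equation and substituting the conditional entropies just found yields, for the third index $k$,
\begin{equation*}
    \Pi_{\{i\}\{j\}} = I(X_i; X_j \mid X_k) - s .
\end{equation*}
Feeding this back into a single-variable equation together with the chain rule $I(X_1; (X_2,X_3)) = I(X_1;X_2) + I(X_1; X_3 \mid X_2)$ collapses the triple-intersection atom to the co-information shifted by $s$,
\begin{equation*}
    \Pi_{\{1\}\{2\}\{3\}} = s + \bigl(I(X_1;X_2) - I(X_1;X_2\mid X_3)\bigr).
\end{equation*}
To certify that the triple equation introduces no hidden contradiction from over-counting, I would invoke (or quickly re-derive) the standard I-measure identity $H(X_1,X_2,X_3) = \sum_i H(X_i\mid X_j,X_k) + \sum_{i<j} I(X_i;X_j\mid X_k) + \bigl(I(X_1;X_2)-I(X_1;X_2\mid X_3)\bigr)$, which is exactly the consistency condition satisfied by the expressions above.

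At this point every atom is affine in $s$, so non-negativity reduces to a handful of inequalities on the single parameter. The synergy/ghost atom forces $s \geq 0$; the three pairwise atoms force $s \leq \min_{i<j} I(X_i;X_j\mid X_k)$; and the redundancy atom forces $s \geq I(X_1;X_2\mid X_3) - I(X_1;X_2)$. Writing $\mathrm{coI} = I(X_1;X_2) - I(X_1;X_2\mid X_3)$ for the (symmetric) co-information, the last condition reads $s \geq -\mathrm{coI}$, and a non-negative decomposition exists precisely when the interval
\begin{equation*}
    \max\bigl(0,\ -\mathrm{coI}\bigr) \ \leq\ s\ \leq\ \min_{i<j} I(X_i;X_j\mid X_k)
\end{equation*}
is non-empty.

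The final and main step is to prove this interval is never empty. The upper bound is non-negative because conditional mutual information is non-negative, which disposes of the $s\geq 0$ branch. For the $s \geq -\mathrm{coI}$ branch, the crucial point is the symmetry of the co-information: it may be rewritten for \emph{every} pair as $-\mathrm{coI} = I(X_i;X_j\mid X_k) - I(X_i;X_j)$, and since each $I(X_i;X_j)\geq 0$ this gives $-\mathrm{coI} \leq I(X_i;X_j\mid X_k)$ for all three pairs simultaneously, hence $-\mathrm{coI} \leq \min_{i<j} I(X_i;X_j\mid X_k)$. Combining the two branches shows the interval is non-empty, so any admissible $s$ (for instance its lower endpoint) produces the required non-negative atoms. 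I expect this non-emptiness argument to be the crux: it is the one place where the symmetry of co-information and the positivity of Shannon mutual information conspire to guarantee feasibility, whereas a crude count of the nine constraints would leave existence in genuine doubt.
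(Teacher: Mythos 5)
Your proof is correct and follows essentially the same route as the paper's: both reduce the system to a single degree of freedom, bound it by an interval, and establish non-emptiness via the identity $I_3(X_1;X_2;X_3) = I(X_i;X_j) - I(X_i;X_j\mid X_k)$ together with non-negativity of (conditional) mutual information. The only difference is cosmetic — you parametrize by the synergy $s=\Pi_s$ where the paper uses the redundancy $\Pi_{\{1\}\{2\}\{3\}} = s + I_3$, so your feasibility interval is the paper's shifted by $I_3$ and the crux inequality is the same one.
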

\begin{proof}
    One can find the sizes of atoms $\Pi_{\{i\}}$ from the last $4$ equations in the system
    \begin{eqnarray}
        && \Pi_{\{1\}} = H(X_1, X_2, X_3) - H(X_2, X_3) \geq 0, \cr 
        && \Pi_{\{2\}} = H(X_1, X_2, X_3) - H(X_1, X_3) \geq 0, \cr
        && \Pi_{\{3\}} = H(X_1, X_2, X_3) - H(X_1, X_2) \geq 0
    \end{eqnarray}

    For the rest of set-theoretic atoms we have
    \begin{eqnarray}
        && I(X_1; X_2) = \Pi_{\{1\}\{2\}\{3\}} + \Pi_{\{1\}\{2\}}, \cr 
        && I(X_1; X_3) = \Pi_{\{1\}\{2\}\{3\}} + \Pi_{\{1\}\{3\}}, \cr 
        && I(X_2; X_3) = \Pi_{\{1\}\{2\}\{3\}} + \Pi_{\{2\}\{3\}}
    \end{eqnarray}
    To satisfy the non-negativity requirement, we need
    \begin{eqnarray}
        && \Pi_{\{1\}\{2\}} = I(X_1; X_2) - \Pi_{\{1\}\{2\}\{3\}} \geq 0, \cr
        && \Pi_{\{1\}\{3\}} = I(X_1; X_3) - \Pi_{\{1\}\{2\}\{3\}} \geq 0, \cr
        && \Pi_{\{2\}\{3\}} = I(X_2; X_3) - \Pi_{\{1\}\{2\}\{3\}} \geq 0,
    \end{eqnarray}
    which is equivalent to 
    \begin{eqnarray}
        && 0 \leq \Pi_{\{1\}\{2\}\{3\}} \leq \min\left(I(X_1; X_2), I(X_1; X_3), I(X_2; X_3)\right) 
    \end{eqnarray}
    The last independent equation can be used in the following form (see \ref{SI:interactioninfo}) 
    \begin{eqnarray}
        && I_3(X_1; X_2; X_3) = \Pi_{\{1\}\{2\}\{3\}} - \Pi_s,
    \end{eqnarray}
    therefore 
    \begin{eqnarray}
        && \Pi_s = \Pi_g = \Pi_{\{1\}\{2\}\{3\}} - I_3(X_1; X_2; X_3) \geq 0
    \end{eqnarray}
    The obtained set of conditions is indeed self-consistent, as for any three random variables
    \begin{eqnarray}
        && \min\left(I(X_1; X_2), I(X_1; X_3), I(X_2; X_3)\right) \geq I_3(X_1; X_2; X_3)
    \end{eqnarray}
\end{proof}

\section{Partial information decomposition (PID)}

The partial information decomposition atoms are only a subset of all atoms $\Pi$. Yet, for some systems it may be equal to the full set. Indeed, when the output is exactly the joint distribution of all inputs, it essentially 'covers' the whole diagram of the system of inputs. The entropies of inputs completely turn into mutual information about the output.

\begin{lemma}\label{SI:lemma:BROJresolution}
    The partial information decomposition with inputs $X_1, \dots X_N$ and their joint distribution chosen as an output $X_{N+1} =(X_1, \dots, X_N)$ contains all information atoms $\Pi$ of the system $X_1, \dots X_{N+1}$.
\end{lemma}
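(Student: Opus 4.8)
The plan is to reduce the claim to an equality of two inclusion-exclusion terms in the extended RV space and then to invoke Lemma \ref{SI:lemma:equalsizefunctions}. The decisive observation is that taking the output to be the joint distribution of all inputs means, in the extended RV space, that $X_{N+1}$ is nothing but the union of the inputs: $X_{N+1} = (X_1, \dots, X_N) = X_1 \cup \dots \cup X_N$. This single identity, combined with the idempotence axioms $X \cup X = X$ and $X \cap X = X$, drives the whole argument.

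First I would recall which atoms count as partial information decomposition atoms. As stated in the main text, an atom $\Pi_j$ of positive size belongs to the decomposition exactly when it is a part of the term $\Xi_{\{12\dots N\}\{N+1\}} = (X_1 \cup \dots \cup X_N) \cap X_{N+1}$, that is, when $f_{\{12\dots N\}\{N+1\},\,j} = 1$. By contrast, the whole system $X_1, \dots, X_{N+1}$ is represented by the top term $\Xi_{\{12\dots (N+1)\}} = X_1 \cup \dots \cup X_{N+1}$.

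Next I would show that these two terms coincide as members of the extended RV space. Substituting $X_{N+1} = X_1 \cup \dots \cup X_N$ and using idempotence, $(X_1 \cup \dots \cup X_N) \cap X_{N+1} = (X_1 \cup \dots \cup X_N) \cap (X_1 \cup \dots \cup X_N) = X_1 \cup \dots \cup X_N$, while likewise $X_1 \cup \dots \cup X_{N+1} = (X_1 \cup \dots \cup X_N) \cup (X_1 \cup \dots \cup X_N) = X_1 \cup \dots \cup X_N$. Hence $\Xi_{\{12\dots N\}\{N+1\}} = \Xi_{\{12\dots (N+1)\}}$, and Lemma \ref{SI:lemma:equalsizefunctions} gives $f_{\{12\dots N\}\{N+1\},\,j} = f_{\{12\dots (N+1)\},\,j}$ for every atom with $\Pi_j > 0$. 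It remains to note that the top-term row is identically $1$ on positive atoms: comparing the total-area law (\ref{totalarea}), $\hat H(\Xi_{\{12\dots (N+1)\}}) = \sum_j \Pi_j$, with the parthood expansion (\ref{axioms}), $\hat H(\Xi_{\{12\dots (N+1)\}}) = \sum_j f_{\{12\dots (N+1)\},\,j}\,\Pi_j$, and using $f \in \{0,1\}$, $\Pi_j \ge 0$, forces each coefficient to equal $1$. Combining, $f_{\{12\dots N\}\{N+1\},\,j} = 1$ for every positive atom, which is exactly the assertion that all information atoms of $X_1, \dots, X_{N+1}$ appear in the partial information decomposition.

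I expect the only delicate point to be the bookkeeping in the extended RV space, namely checking that the absorption step is licensed purely by the idempotence and commutativity axioms together with the subset definition (\ref{SI:subsetdef}), with no hidden appeal to distributivity, which fails here. Everything downstream is a direct application of the already-established equal-term lemma and the conservation law, so no further obstacle is anticipated.
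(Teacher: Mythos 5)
Your proof is correct and follows essentially the same route as the paper: identify the PID atoms with the row of $\Xi_{\{1\dots N\}\{N+1\}}$, show that this term equals the whole-system union in the extended RV space by substituting $X_{N+1}=X_1\cup\dots\cup X_N$ and using idempotence, and then invoke Lemma \ref{SI:lemma:equalsizefunctions}. The only difference is that you make explicit the final step---that the top row of the parthood table is identically $1$ on atoms of positive size, deduced by comparing the parthood expansion with the total-area law---which the paper leaves implicit.
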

\begin{proof}
     The PID atoms are by definition the ones contained in the intersection of the form 
    \begin{eqnarray}
        && (X_1\cup X_2\cup\dots\cup X_N)\cap X_{N+1}
    \end{eqnarray}
    By conditions of the lemma, in extended random variable space we have 
    \begin{eqnarray}
        && (X_1\cup X_2\cup\dots\cup X_N)\cap X_{N+1} = \cr  
        && = (X_1\cup X_2\cup\dots\cup X_N) = \cr 
        && = (X_1\cup X_2\cup\dots\cup X_N)\cup X_{N+1}
    \end{eqnarray}
    Applying Lemma \ref{SI:lemma:equalsizefunctions} concludes the proof.
\end{proof}

A stronger statement can be made that the whole structure of the resulting $N+1$ variable decomposition is equivalent to the lesser decomposition of just the inputs $X_1, \dots X_N$ with a single extra covering added to each atom to account for the output $X_{N+1}$ 'covering' the whole system one more time.

\begin{theorem}\label{SI:theorem:BROJresolution}
    A decomposition for the $N+1$ variable system $X_1, \dots X_{N+1}$ with
    \begin{eqnarray}
        && X_{N+1} = X_1\cup X_2\cup\dots\cup X_N
    \end{eqnarray}
    defined by a set of atoms $\{\Pi_i[c_i]\}_{i\in I}$ and parthood table $f$ can be obtained from the decomposition $\{\tilde \Pi_j[\tilde c_j]\}_{j\in J}, \tilde f$ of the $N$ variable system $X_1, \dots X_N$ as
    \begin{eqnarray}\label{temp0}
        && \forall \alpha\in \mathcal{A}(N+1), i\in I \text{ } \begin{cases}
            f_{\alpha i} = \tilde f_{F(\alpha) j(i)} \\
            \Pi_i = \tilde \Pi_{j(i)} \\
            c_i = \tilde c_{j(i)} + 1
        \end{cases}
    \end{eqnarray}
    where $j(i)$ is a bijection of indices and (surjective) function $F: \mathcal{A}(N+1)\rightarrow \mathcal{A}(N)$ removes any bracket containing index $N+1$ from the antichain.
\end{theorem}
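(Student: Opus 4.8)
The plan is to read the correspondence (\ref{temp0}) as a \emph{definition} of a candidate decomposition for the $(N+1)$-variable system and then verify that it satisfies all four axioms (\ref{axioms}); the whole argument rests on a single structural identity in the extended RV space. Since $X_{N+1} = X_1\cup\dots\cup X_N$, any union-bracket $\bigcup_{i\in\mathbf{A}} X_i$ whose index set $\mathbf{A}$ contains $N+1$ equals the top element $X_1\cup\dots\cup X_N$: the term $X_{N+1}$ already exhausts the join and the remaining $X_i$ are subsets of it. Moreover, because indices never repeat across the brackets of an antichain, at most one bracket of any $\alpha\in\mathcal{A}(N+1)$ can contain $N+1$. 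Hence in the intersection $\Xi_\alpha=\bigcap_{\mathbf{A}\in\alpha}(\bigcup_{i\in\mathbf{A}}X_i)$ that single top-valued bracket is absorbed by every other (smaller) bracket, yielding the key identity
\begin{equation*}
    \Xi_\alpha = \Xi_{F(\alpha)}
\end{equation*}
as members of the extended RV space, where $F$ deletes the bracket containing $N+1$. I would establish this carefully, including the boundary case in which $\alpha$ is a single $N+1$-bracket (then $F(\alpha)$ is empty and both sides equal the top), and check along the way that $F$ lands in $\mathcal{A}(N)$ and is surjective.

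With this identity the decomposition equation and monotonicity follow almost at once. For the first axiom, $\hat H(\Xi_\alpha)=\hat H(\Xi_{F(\alpha)})=\sum_j \tilde f_{F(\alpha)j}\tilde\Pi_j=\sum_i f_{\alpha i}\Pi_i$, using $\Pi_i=\tilde\Pi_{j(i)}$ and $f_{\alpha i}=\tilde f_{F(\alpha)j(i)}$. For monotonicity, if $\Xi_\alpha\subseteq\Xi_{\alpha'}$ then by the identity $\Xi_{F(\alpha)}\subseteq\Xi_{F(\alpha')}$ in the $N$-variable system, so the monotonicity already known for $\tilde f$ transfers verbatim; equivalently one may invoke Lemma \ref{SI:lemma:equalsizefunctions} to note that $\alpha$ and $F(\alpha)$ must share a parthood row. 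The conservation law is the one place where the extra covering enters: I would compute $\sum_i c_i\Pi_i=\sum_j(\tilde c_j+1)\tilde\Pi_j=\sum_j\tilde c_j\tilde\Pi_j+\sum_j\tilde\Pi_j$, recognize the first sum as $\sum_{k=1}^N H(X_k)$ by the $N$-variable conservation law and the second as the total area $H(X_1,\dots,X_N)=H(X_{N+1})$ via (\ref{totalarea}), and conclude $\sum_i c_i\Pi_i=\sum_{k=1}^{N+1}H(X_k)$.

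The remaining and most delicate step is the covering rule $c_i=\max_{\alpha:\,f_{\alpha i}=1}|\alpha|$, and here the index-disjointness constraint is used twice. For the upper bound, $f_{\alpha i}=1$ means $\tilde f_{F(\alpha)j(i)}=1$, hence $|F(\alpha)|\le\tilde c_{j(i)}$; since $F$ deletes at most one bracket, $|\alpha|\le|F(\alpha)|+1\le\tilde c_{j(i)}+1=c_i$. For the matching lower bound, I would take a maximizing antichain $\beta$ in the $N$-variable system with $\tilde f_{\beta j(i)}=1$ and $|\beta|=\tilde c_{j(i)}$, append the singleton bracket $\{N+1\}$ — which is incomparable to and index-disjoint from every bracket of $\beta$, so $\beta\{N+1\}\in\mathcal{A}(N+1)$ with $F(\beta\{N+1\})=\beta$ — and observe that this antichain has cardinality $\tilde c_{j(i)}+1$ while its parthood entry for atom $i$ equals $1$, attaining the bound. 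The main obstacle I anticipate is precisely this covering argument: one must be certain that no antichain can fit more than a single $N+1$-bracket (guaranteed by index non-repetition) and that appending $\{N+1\}$ is always legal and leaves $F(\alpha)$ unchanged, so the maximum rises by exactly one and never more; the boundary cases surrounding the top element in the structural identity likewise demand care to state cleanly.
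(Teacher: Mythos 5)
Your proposal is correct and follows essentially the same route as the paper's proof: establish the structural identity $\Xi_\alpha = \Xi_{F(\alpha)}$, transfer the $N$-variable atoms and parthood rows via $F$ and the bijection $j(i)$, and then verify monotonicity, the covering rule, and the conservation law exactly as you describe. Your treatment of the covering rule (explicit upper bound via $|\alpha|\le|F(\alpha)|+1$ and lower bound by appending the bracket $\{N+1\}$) is in fact slightly more careful than the paper's one-line case split, but it is the same argument.
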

\begin{proof}
    Examining the inclusion-exclusion terms, we find that
    \begin{eqnarray}\label{temp}
        && \forall \alpha\in \mathcal{A}(N+1) \text{ } \Xi_{\alpha} = \Xi_{F(\alpha)}
    \end{eqnarray}
    By Lemma \ref{SI:lemma:equalsizefunctions} this guarantees the equivalence of the corresponding parthood table rows
    \begin{eqnarray}\label{temp1}
        && \forall i\in I, \alpha\in \mathcal{A}(N+1) \text{ } f_{\alpha i} = f_{F(\alpha) i}
    \end{eqnarray}
    Now we need to determine the parthood table rows only for $\alpha \in \text{Im}(F) = \mathcal{A}(N)$. Knowing the solution $\{\tilde \Pi\}_{j\in J}$ for the $N$-variable system, we substitute the same atoms into the larger $N+1$ variable system and define a bijection of indices $j(i)$
    \begin{eqnarray} 
        && \hat H(\Xi_{\alpha}) = \hat H(\Xi_{F(\alpha)}) = \sum_{j\in J} \tilde f_{F(\alpha) j}\tilde \Pi_j = \sum_{i\in I} f_{\alpha i}\Pi_i, \cr
        && \forall i\in I, \alpha\in \mathcal{A}(N+1) \qquad \Pi_i = \tilde \Pi_{j(i)}, f_{\alpha i} = \tilde f_{F(\alpha) j(i)}
    \end{eqnarray}
    Finally, to ensure the validity of the new solution, we check its compliance with axioms (\ref{axioms})
    \begin{enumerate}
        \item Monotonicity
        \begin{eqnarray}
            && \Xi_\alpha \subseteq \Xi_\beta \Rightarrow \Xi_{F(\alpha)} \subseteq \Xi_{F(\beta)} \Rightarrow \forall i\in I\text{ } f_{\alpha i} = \tilde f_{F(\alpha) j(i)} \leq \tilde f_{F(\beta) j(i)} = f_{\beta i}
    \end{eqnarray}
        \item Covering numbers
        \begin{eqnarray}
            && \begin{cases} N+1 \in \alpha \Rightarrow |\alpha| = |F(\alpha)|+1 \\ N+1 \notin \alpha \Rightarrow |\alpha| = |F(\alpha)| \end{cases}
        \end{eqnarray}
        \begin{eqnarray}
            && c_i = \max_{\alpha\in \mathcal{A}(N+1): f_{\alpha i}=1} |\alpha| = \max_{\beta\in \mathcal{A}(N): \tilde f_{\beta j(i)}=1} |\beta| + 1 = \tilde c_{j(i)} + 1
        \end{eqnarray}
        \item Information conservation law
        \begin{eqnarray}
            && H(X_{N+1}) = H(X_1, \dots X_N) = \sum_{j\in J} \tilde \Pi_j, \qquad \sum_{k=1}^{N} H(X_k) = \sum_{j\in J} \tilde c_j \tilde \Pi_j, \cr
            && \sum_{k=1}^{N+1} H(X_k) = \sum_{j\in J} \tilde c_i\tilde \Pi_i + H(X_{N+1}) = \sum_{j\in J} (\tilde c_j+1)\tilde \Pi_j = \sum_{i\in I} c_i\Pi_i
        \end{eqnarray}
    \end{enumerate}
 
\end{proof}

\end{appendices}

\end{document}